\documentclass[12pt, onecolumn,a4paper,unpublished]{quantumarticle}
\pdfoutput=1 
\usepackage[T1]{fontenc}
\usepackage{makecell}
\usepackage{tikz}
\usetikzlibrary{positioning, calc}
\usetikzlibrary{arrows}
\usepackage{tikz-3dplot}
\usepackage{graphicx}
\usepackage{mdwlist}
\usepackage{color}
\usepackage{thmtools}
\usepackage{amssymb}
\usepackage{physics}
\usepackage{amsmath}
\usepackage{array}
\usepackage{doi}
\usepackage{url,hyperref}
\usepackage[capitalize, nameinlink,noabbrev]{cleveref}
\hypersetup{colorlinks={true},linkcolor={blue},citecolor=red}

\usepackage[numbers,sort&compress]{natbib}
\usetikzlibrary{fadings}
\usetikzlibrary{decorations.pathmorphing}
\usepackage{mathrsfs}
\usepackage{authblk}
\usepackage[mathscr]{euscript}
\usepackage{enumitem}
\usepackage{quantikz}
\usepackage[normalem]{ulem}

\newcommand{\pub}{\mathsf{pub}}

\newcommand{\Rent}{\mathsf{R}\|^*}
\newcommand{\Rsim}{\mathsf{R}\|}
\newcommand{\Qent}
{\mathsf{Q}\|^*}
\newcommand{\Qsim}{\mathsf{Q}\|}

\newcommand{\vabs}[1]{\left\Vert #1 \right\Vert}

\DeclareMathAlphabet\mathbfcal{OMS}{cmsy}{b}{n}

\usetikzlibrary{decorations.pathreplacing}
\tikzset{snake it/.style={decorate, decoration=snake}}

\usetikzlibrary{decorations.pathreplacing,decorations.markings}

\tikzset{
    >=stealth',
    punkt/.style={
           rectangle,
           rounded corners,
           draw=black, very thick,
           text width=6.5em,
           minimum height=2em,
           text centered},
    pil/.style={
           ->,
           thick,
           shorten <=2pt,
           shorten >=2pt,},
  on each segment/.style={
    decorate,
    decoration={
      show path construction,
      moveto code={},
      lineto code={
        \path [#1]
        (\tikzinputsegmentfirst) -- (\tikzinputsegmentlast);
      },
      curveto code={
        \path [#1] (\tikzinputsegmentfirst)
        .. controls
        (\tikzinputsegmentsupporta) and (\tikzinputsegmentsupportb)
        ..
        (\tikzinputsegmentlast);
      },
      closepath code={
        \path [#1]
        (\tikzinputsegmentfirst) -- (\tikzinputsegmentlast);
      },
    },
  },
  mid arrow/.style={postaction={decorate,decoration={
        markings,
        mark=at position .5 with {\arrow[#1]{stealth'}}
      }}}
}

\usepackage{slashed}
\usepackage{float} 
\usepackage{subcaption}

\mathchardef\mhyphen="2D

\newcommand{\CDQS}{\mathsf{CDQS}}
\newcommand{\pp}{\mathsf{pp}}
\newcommand{\pc}{\mathsf{pc}}

\newcommand{\PSM}{\mathsf{PSM}}
\newcommand{\PSQM}{\mathsf{PSQM}}

\newcommand{\eps}{\epsilon}

\newtheorem{theorem}{Theorem}

\newtheorem{corollary}[theorem]{Corollary}

\newtheorem{definition}[theorem]{Definition}

\newtheorem{lemma}[theorem]{Lemma}

\newenvironment{proof}[1][Proof]{\noindent\textbf{#1. }}{\ \rule{0.5em}{0.5em}}
\usepackage{cancel}

\newcommand{\ntot}{n_{\text{tot}}}
\begin{document}

\author[1]{Uma Girish}
\email{ug2150@columbia.edu}
\orcid{0000-0003-3055-9406}

\author[2,3]{Alex May}
\email{amay@perimeterinstitute.ca}
\orcid{0000-0002-4030-5410}

\author[1]{Natalie Parham}
\email{natalie@cs.columbia.edu}
\orcid{0000-0002-8792-1229}

\author[1]{Henry Yuen}
\email{hyuen@cs.columbia.edu}
\orcid{0000-0002-2684-1129}

\affiliation[1]{Columbia University}
\affiliation[2]{Perimeter Institute for Theoretical Physics}
\affiliation[3]{Institute for Quantum Computing, University of Waterloo}

\title{New bounds on private simultaneous quantum message passing}

\abstract{In the private simultaneous message (PSM) setting, $k$ players obtain inputs $x_i\in\{0,1\}^n$ and then independently send messages to a referee, who should learn $f(x_1,...,x_k)$ but no other information about $(x_1,...,x_k)$. 
The PSM setting was introduced as a minimal model for secure multiparty computation.
In the quantum setting, PSM has been related to non-local quantum computation (NLQC), and has several connections to the complexity of Boolean functions.
The communication and correlation cost of implementing private simultaneous message passing may be much larger than the cost without privacy, and the cost of privacy in this setting remains poorly understood. 
Here, we give new upper and lower bounds on the PSM model, in both the quantum and classical settings. 
Concretely, we prove two lower bounds:
\begin{itemize}
    \item Ne\v{c}iporuk's measure lower bounds the entanglement required for $k$-player quantum PSM with perfect correctness. This can be evaluated to give quadratic lower bounds for some explicit functions.
    \item The rank of the communication matrix of $f(x_1,x_2)$ lower bounds 2-player quantum PSM with perfect privacy but imperfect correctness. This implies a previously unknown lower bound on classical PSM with imperfect correctness.
\end{itemize}
When allowing both quantum communication and shared entanglement, these two bounds are the first lower bounds on quantum PSM that make use of the privacy condition.
Regarding upper bounds, we show:
\begin{itemize}
    \item Letting $s$ be the size of a quantum circuit computing $f$, $d_f$ be the circuit depth, $k$ the number of players, $n$ the number of bits received by each player, and $\epsilon$ the correctness parameter of the PSM protocol, we obtain the upper bound $\PSM_k^*(f) \leq (kn +s) \cdot \log^{O( d_f)}(s/\eps)$. 
    \item The square of the Fourier 1 norm of $f$, $\Vert \hat{f}\Vert_1^2$, upper bounds the classical PSM complexity, $\PSM(f)\leq O(\Vert \hat{f} \Vert^2_1)$.
\end{itemize}
In proving the first upper bound, we also generalize existing $T$-depth based techniques for NLQC from $2$ to $k\geq 2$ parties, and consider cases where the Clifford layers are restricted to having small light cones. 
These generalizations may be of independent interest. 
}

\maketitle

\tableofcontents

\section{Introduction}

What is the cost of information-theoretic privacy? 
This is a fundamental question in cryptography which reappears across many settings, and which is closely related to understanding the complexity of Boolean functions. 
In this work, we make progress on understanding this problem in both the classical and quantum settings by giving several new upper and lower bounds on the private simultaneous message (PSM) model \cite{feige1994minimal,ishai1997private}.

The PSM setting is illustrated in \cref{fig:quantumandclassicalPSM}.
The setting involves $k$ parties, who we label player 1, player 2, etc, and a referee. 
Player $i$ receives input $x_i\in\{0,1\}^n$.
All parties including the referee agree in advance on a choice of Boolean function $f:\{0,1\}^{kn}\rightarrow \{0,1\}$.
Classically, the players share randomness that is inaccessible to the referee,  and send classical messages to the referee.
The goal is for the referee to compute $f(x_1,...,x_k)$ without learning anything further about the value of $(x_1,...,x_k)$.
We consider this model in both quantum and classical variants.
When $n=1$, so that each of $k$ players receives a single bit of input, this setting is also an example of a \emph{decomposable randomized encoding}.
Another special case we often consider fixes $k=2$. 

\begin{figure*}
    \centering
    \begin{subfigure}{0.45\textwidth}
    \centering
    \begin{tikzpicture}[scale=0.4]
    
    \draw[thick] (-5,-5) -- (-5,-3) -- (-3,-3) -- (-3,-5) -- (-5,-5);
    
    \draw[thick] (5,-5) -- (5,-3) -- (3,-3) -- (3,-5) -- (5,-5);
    
    \draw[thick] (5,5) -- (5,3) -- (3,3) -- (3,5) -- (5,5);
    
    \draw[thick] (4,-3) -- (4.5,3);
    
    \draw[thick] (-4,-3) to [out=90,in=-90] (3.5,3);
    
    \draw[thick,dashed] (-3.5,-5) to [out=-90,in=-90] (3.5,-5);
    \node[below] at (-2.5,-6.4) {$r$};
    \draw[black] plot [mark=*, mark size=3] coordinates{(0,-7.05)};
    \node[below] at (2.5,-6.4) {$r$};

    \node[left] at (0,1) {$m_0(x_1,r)$};
    \node[right] at (4.5,0) {$m_1(x_2,r)$};
    
    \draw[thick] (-4.5,-6) -- (-4.5,-5);
    \node[below] at (-4.5,-6) {$x_1$};
    
    \draw[thick] (4.5,-6) -- (4.5,-5);
    \node[below] at (4.5,-6) {$x_2$};
    
    \draw[thick] (4,5) -- (4,6);
    \node[above] at (4,6) {$f(x_1,x_2)$};
    
    \end{tikzpicture}
    \caption{}
    \label{fig:PSM}
    \end{subfigure}
    \hfill
    \begin{subfigure}{0.45\textwidth}
    \centering
    \begin{tikzpicture}[scale=0.4]
    
    \draw[thick] (-5,-5) -- (-5,-3) -- (-3,-3) -- (-3,-5) -- (-5,-5);
    
    \draw[thick] (5,-5) -- (5,-3) -- (3,-3) -- (3,-5) -- (5,-5);
    
    \draw[thick] (5,5) -- (5,3) -- (3,3) -- (3,5) -- (5,5);
    
    \draw[thick] (4,-3) -- (4.5,3);
    
    \draw[thick] (-4,-3) to [out=90,in=-90] (3.5,3);
    
    \draw[thick] (-3.5,-5) to [out=-90,in=-90] (3.5,-5);
    \node[below] at (0,-7.25) {$\Psi_{LR}$};
    \draw[black] plot [mark=*, mark size=3] coordinates{(0,-7.05)};

    \node[left] at (0,1) {$M_0$};
    \node[right] at (4.5,0) {$M_1$};
    
    \draw[thick] (-4.5,-6) -- (-4.5,-5);
    \node[below] at (-4.5,-6) {$x_1$};
    
    \draw[thick] (4.5,-6) -- (4.5,-5);
    \node[below] at (4.5,-6) {$x_2$};
    
    \draw[thick] (4,5) -- (4,6);
    \node[above] at (4,6) {$f(x_1,x_2)$};
    
    \end{tikzpicture}
    \caption{}
    \label{fig:PSQM}
    \end{subfigure}
    \caption{A private simultaneous message protocol (PSM). We show the case with two players for simplicity, but in general we consider $k\geq 2$ players. Players 1 and 2 do not communicate. Player $i$ holds input $x_i\in\{0,1\}^n$. The referee should be able to learn $f(x_1,...,x_k)$ but nothing else about $(x_1,...,x_k)$. a) In the classical setting, the players share a random string $r$ and send classical messages to the referee. b) In the quantum setting the randomness is replaced with an entangled state $\Psi_{LR}$, and the messages can be quantum.}
    \label{fig:quantumandclassicalPSM}
\end{figure*}

\subsection{Related work}

The PSM model was introduced in a classical context in \cite{feige1994minimal} as a simple toy model for secure multi-party computation. 
Since then, several applications and connections to other primitives in cryptography have appeared \cite{ishai1997private}. 

The definition of PSM is information-theoretic and does not refer to complexity theory. 
Nonetheless, a relationship between PSM and complexity emerges: known protocols for implementing PSM have a communication and randomness cost set by the complexity of the function $f$.
For instance, in \cite{feige1994minimal} an efficient protocol for functions in $\mathsf{NL}$ was given, based on a reduction to a PSM protocol for group products. 
This was improved in \cite{ishai1997private} to give efficient protocols for functions in $\mathsf{Mod}_p\mathsf{L}$\footnote{$\mathsf{Mod}_p\mathsf{L}$ contains $\mathsf{NL}$, and this containment is believed to be strict.} as well as other `counting' variants of log-space classes. 

Without the privacy condition, PSM complexity would be at most linear, since Alice and Bob can simply send their inputs to the referee. 
Meanwhile, most notions of Boolean function complexity can be as large as exponential, so it is clear that without privacy PSM complexity and Boolean function complexity can only be loosely related. 
With privacy however, the relationship may be much tighter. 
Better understanding the relationship between privacy and the complexity of Boolean functions is one motivation for studying the cost of privacy in PSM. 

Only a few lower bound techniques that make use of the privacy condition are known for PSM.  
In the classical setting, one such technique was proposed in \cite{feige1994minimal} and then corrected in \cite{applebaum2020communication}. 
This technique, when we have $k=2$ players, gives a lower bound in terms of a condition on rectangles in the communication matrix for $f$; for random functions, it leads to a $3n-o(1)$ lower bound. 
This can be seen to make non-trivial use of privacy in that without privacy, the cost is at most $2n$, the total input length. 

In the classical context, a lower bound on $k$-party PSM was proven in \cite{ball2020complexity,ball2022note}. 
This lower bound is in terms of Ne\v{c}iporuk's measure, an object previously known to lower bound formula size for Boolean functions. 
Concretely, \cite{ball2020complexity,ball2022note} prove that for perfectly correct and perfectly secure PSM,
\begin{align}
    \PSM(f)\geq \frac{1}{2}G^*(f)
\end{align}
where $G^*(f)$ is Ne\v{c}iporuk's measure. 
A similar bound holds with perfect security relaxed.
Heuristically, Ne\v{c}iporuk's measure can be understood as the logarithm of the number of distinct functions that appear when fixing subsets of the variables. 
In more detail, we sum over a partition of the variables and consider the number of distinct functions that arise when fixing all but the variables in the current subset of the partition. 
For some explicit functions on $k$ inputs each of length $n$, Ne\v{c}iporuk's measure evaluates to $\frac{k^2n}{2\log(kn)}$, so when (for instance) $n=1$, this is a nearly quadratic lower bound. 
This bound also necessarily uses privacy, since it is super-linear.

Recently \cite{kawachi2021communication,allerstorfer2024relating} a new perspective on PSM and related primitives has emerged, which considers quantum variants of these settings.
This provides a new setting in which to explore the relationship between privacy, complexity, and communication cost. 
In \cite{kawachi2021communication}, quantum PSM was first studied. 
In that context, with $k=2$ players, a lower bound of $3n-o(1)$ for random functions was proven in the context of quantum communication but restricting to classical shared randomness.
In \cite{kawachi2021communication} it was shown that for some relations, communication cost in PSM can be exponentially smaller when allowing shared entanglement; this was later proven for a partial function in \cite{girish2025comparing}. 

Quantum PSM exhibits a relationship between privacy and notions of quantum complexity. 
In particular, in \cite{allerstorfer2024relating} it was pointed out that communication and entanglement cost in PSM with shared entanglement allowed is upper bounded in terms of the $T$-depth of any unitary that computes the relevant Boolean function. 
Later \cite{girish2025magic}, similar techniques were used to observe that certain communication complexity protocols can be transformed into PSM protocols.
In particular, let $\PSM^*(f)$ denote the communication cost of PSM when allowing shared entanglement but restricting to classical messages, and let $\Qsim^*$ be the communication cost in the simultaneous message model, allowing quantum communication and shared entanglement. 
Then \cite{girish2025magic} showed that
\begin{align}
    \PSM^*(f) \leq \left(\Qsim^*(f)+a\right)^{d_T}
\end{align}
where $d_T$ is the $T$-depth of the unitary applied by the referee in the $\Qsim^*$ protocol, and $a$ is the number of qubits of ancilla used by the referee. 
In practice, this leads to new efficient PSM$^*$ protocols constructed from existing $\Qsim^*$ protocols in several interesting cases, and in particular leads to new separations between communication complexity classes. Furthermore, this shows that a separation between $\PSM^*$ and $\Qsim^*$ implies a T-depth lower bound. 

\subsection{Our results}

We contribute to the understanding of the cost of privacy in quantum and classical PSM by proving two new lower bounds on entanglement cost, and two new upper bounds.
One of each of our lower and upper bounds is new to even classical PSM.    
Our lower bounds are the first lower bounds on fully quantum PSM which make use of the privacy condition, where by fully quantum we mean that both entanglement and quantum communication are allowed. 
Previously, lower bounds that used privacy only applied to the case where the messages are quantum, but no shared entanglement is allowed \cite{kawachi2021communication}.

Our first lower bound is in terms of the rank of the communication matrix of the target function $f$, and applies to two player quantum PSM,
\begin{align}
    \boxed{\pp\overline{\PSQM}^*(f)\geq \frac{1}{4}\log \rank f - \frac{1}{4}.}
\end{align}
Here, the left hand side denotes the entanglement cost of fully quantum PSM (allowing entanglement and quantum messages), and requiring perfect privacy. 
The communication matrix $M_f$ of $f$ is a $2^n\times 2^n$ matrix with entries $f(x,y)$.
We compute the rank over the complex numbers. 
Notice that this lower bound follows without using privacy if we assume perfect correctness, because quantum communication complexity with perfect correctness is lower bounded by the log-rank \cite{buhrman2001communication}. 
Our result holds with imperfect correctness however, and relies instead on the perfect privacy condition. 
Because $\pp\overline{\PSM}(f) \geq \pp\overline{\PSQM}^*(f)$, we obtain the same lower bound on perfectly private classical PSM. 
This result appears to be new to the classical setting, but is proven from the quantum perspective. 

Our second lower bound applies to $k$-player quantum PSM, with perfect correctness. 
Then, denoting the total entanglement shared among all players by $\pc\overline{\PSQM}_k^*(f)$, we find that
\begin{align}
    \boxed{\pc\overline{\PSQM}_k^*(f) \geq \frac{1}{2}\alpha_\delta G^*(f)-\beta_\delta.}
\end{align}
Here the object $G^*(f)$ appearing in the lower bound is \emph{Ne\v{c}iporuk's measure}, and the $\alpha_\delta, \beta_\delta$ are functions of the security parameter $\delta$ for the protocol. 
Because Ne\v{c}iporuk's measure can be quadratic in $k$, and hence larger than the input size, it is clear that this bound also relies on the privacy condition. 

For upper bounds, we prove two new results. 
First, we prove an upper bound in the $k$ player setting based on a Clifford+$T$ decomposition of any circuit computing the function $f$. 
We find the upper bound
\begin{align}
    \PSM_k^*(f)\leq O((K\ell)^{d_T-1}\cdot kn \cdot \min\{k,\ell\})
\end{align}
where $d_T$ is the $T$-depth of the circuit computing $f$, $k$ is the number of players, $n$ is the number of input bits per player, and $\ell$ measures how much any single Clifford layer spreads the support of an operator. 
This result generalizes the $T$-depth upper bound of \cite{speelman2015instantaneous}, to allow for $k>2$ players and restricted Clifford layers. 

As an application of this upper bound, we show that functions with low-depth quantum circuits \cite{yao1993quantum} can be computed efficiently in the $\PSM^*$ model. In particular, we consider circuits with gates that only act on a constant number of qubits. 
Letting $d_f$ be the minimal depth of any quantum circuit computing $f$, we find
\begin{align}
    \boxed{\PSM^*_k(f) \leq (kn + s) \cdot \log^{O(d_f)}(s/\eps).}
\end{align} 
Thus we find that functions computable in depth $\log(nk)/ \log\log(nk)$ can be computed by a polynomial-cost $\PSM^*_k$ protocol. 
It is interesting to compare this with the classical seting: a combination of Barrington's theorem~\cite{barrington1986bounded} and the PSM construction of~\cite{feige1994minimal} for branching programs implies that there are polynomial-cost classical PSM protocols for all functions computable by logarithmic-depth (classical) circuits. 
We leave it as an interesting open question of improving our bound to get polynomial-cost $\PSM^*$ protocols for quantum circuits with depth $\Omega(\log n)$.

Our second upper bound is in terms of the Fourier 1 norm of $f$, denoted $\Vert \hat{f}\Vert_1$. 
Recall that we define the Fourier transform of $f$ by defining functions $\chi_S(x)=(-1)^{S\cdot x}$ where $S$ labels a subset of the bits of $x$, and then expressing $f(x)$ as
\begin{align}
    f(x) = \sum_S \hat{f}(S) \chi_S(x).
\end{align}
The Fourier 1 norm is then
\begin{align}
    \Vert\hat{f}\Vert_1 = \sum_S |\hat{f}(S)|.
\end{align}
The Fourier 1 norm appears in several contexts \cite{o2014analysis}; most relevantly \cite{grolmusz1997power} proved a $O(\Vert \hat{f}\Vert^2_1)$ upper bound on the simultaneous message passing (SMP) model (which is the same as PSM, but without privacy imposed). This was then used to prove a number of circuit lower bounds. 
We upgrade this upper bound on the SMP model to the PSM model, proving
\begin{align}
    \boxed{\PSM(f) \leq O(\Vert \hat{f}\Vert^2_1).}
\end{align}
It may be interesting to explore applications of this upper bound to proving circuit lower bounds. 

\paragraph{Acknowledgments.} Research at the Perimeter Institute is supported by the Government of Canada through the Department of Innovation, Science and Industry Canada and by the Province of Ontario through the Ministry of Colleges and Universities. UG, NP, and HY are supported by AFOSR award FA9550-23-1-0363, NSF awards CCF-2530159, CCF-2144219, and CCF-2329939, and by the Sloan Foundation. NP is supported by the Google PhD Fellowship.

\section{Model definitions and quantum information tools}

\subsection{Quantum information tools}

Unless otherwise noted, by $\log$ we always mean the base 2 logarithm. 
We define the von Neumann entropy by
\begin{align}
    S(A)_\rho = -\tr \left(\rho_A \log \rho_A\right).
\end{align}
We define the mutual information as
\begin{align}
    I(A:B)_\rho = S(A)_\rho + S(B)_\rho - S(AB)_\rho.
\end{align}
We make use of the following continuity property of the mutual information, which follows from the Alicki-Fannes-Winter inequality \cite{alicki2004continuity,winter2016tight}.
\begin{theorem}\label{thm:MIcontinuity}
Suppose that $||\sigma-\rho||_1\leq 2\epsilon$, and define
\begin{align}
    h(\epsilon)= (\epsilon+1) \log(\epsilon+1) -\epsilon \log \epsilon.
\end{align}
Then
\begin{align}
    |I(A:B)_\sigma - I(A:B)_\rho| \leq 3\epsilon \log d_{A} + 2h(\epsilon).
\end{align}
\end{theorem}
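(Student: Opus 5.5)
The plan is to write the mutual information as a signed combination of conditional entropies. Each term is then controlled by the Alicki--Fannes--Winter continuity bound for conditional entropy, which I take as the underlying tool. Concretely, I would use
\begin{align}
    I(A:B)_\rho = S(A)_\rho - S(A|B)_\rho ,
\end{align}
and similarly for $\sigma$. The point is that both terms are then entropies of system $A$, possibly conditioned on $B$. As a result, only $\log d_A$ enters the bound and $d_B$ never appears, which matches the form of the statement.

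First, I would bound $|S(A|B)_\sigma - S(A|B)_\rho|$. The hypothesis $\|\sigma-\rho\|_1 \le 2\epsilon$ means the trace distance $\frac12\|\sigma-\rho\|_1$ is at most $\epsilon$. Winter's tight form of the Alicki--Fannes inequality then gives
\begin{align}
    |S(A|B)_\sigma - S(A|B)_\rho| \le 2\epsilon\log d_A + (1+\epsilon)\, h_2\!\left(\tfrac{\epsilon}{1+\epsilon}\right).
\end{align}
A direct computation shows that $(1+\epsilon)h_2(\epsilon/(1+\epsilon))$ equals exactly $h(\epsilon)=(\epsilon+1)\log(\epsilon+1)-\epsilon\log\epsilon$. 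So this term is bounded by $2\epsilon\log d_A + h(\epsilon)$.

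Second, I would bound the marginal term $|S(A)_\sigma - S(A)_\rho|$. Partial trace is contractive, so $\frac12\|\sigma_A-\rho_A\|_1\le\epsilon$. The Fannes--Audenaert inequality then gives $|S(A)_\sigma - S(A)_\rho|\le \epsilon\log(d_A-1)+h_2(\epsilon)$, and this is at most $\epsilon\log d_A + h(\epsilon)$ because $h_2(\epsilon)\le h(\epsilon)$. Alternatively, I could apply the conditional bound with a trivial system $B$, which directly gives $2\epsilon\log d_A+h(\epsilon)$. That alone would yield $4\epsilon\log d_A$ overall, so I will use the unconditional Fannes--Audenaert bound to get the stated constant 3.

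Finally, the triangle inequality sums the two contributions, giving $3\epsilon\log d_A + 2h(\epsilon)$, as claimed. I expect the only delicate step to be bookkeeping of normalizations. I need to check that the theorem's $2\epsilon$ in trace norm corresponds to Winter's trace-distance parameter $\epsilon$. I also need to confirm the inequality $h_2(\epsilon)\le h(\epsilon)$, which follows because $h(\epsilon)-h_2(\epsilon)=(1+\epsilon)\log(1+\epsilon)+(1-\epsilon)\log(1-\epsilon)\ge 0$ by convexity. If I have misremembered the constant in Winter's bound (e.g.\ $2\epsilon$ versus $\epsilon$ in front of $\log d_A$), the structure of the argument is unchanged and only the constant shifts.
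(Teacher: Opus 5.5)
Your proposal is correct and is essentially the argument the paper intends. The paper gives no separate proof and only attributes the bound to the Alicki--Fannes--Winter inequality; the constant $3\epsilon\log d_A+2h(\epsilon)$ is exactly what your decomposition $I(A:B)=S(A)-S(A|B)$ yields from Winter's conditional-entropy bound plus a marginal bound. One small repair is needed: $T\log(d_A-1)+h_2(T)$ is monotone in $T$ only for $T\le 1-1/d_A$, so your intermediate bound $\epsilon\log(d_A-1)+h_2(\epsilon)$ can fail when $\epsilon>1-1/d_A$ (e.g.\ $d_A=2$, $\epsilon=1$). In that regime the trivial bound $|S(A)_\sigma-S(A)_\rho|\le\log d_A$ already lies below $\epsilon\log d_A+h(\epsilon)$, because $(1-\epsilon)\log d_A<(\log d_A)/d_A<1<h(1/2)\le h(\epsilon)$, so your final inequality still holds.
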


\subsection{Communication models}

In this section we define the relevant classical and quantum communication models. 
In all cases we are interested in the simultaneous message passing scenario, so that the communication pattern allows only for the players to send messages to a referee. 
The models we consider can then be divided into models without privacy, and those with privacy.
We begin by defining the non-private models. 

\begin{definition}\label{def:SMP}
Let $f : \{0,1\}^n\times \{0,1\}^n\rightarrow \{0,1\}$ be a (partial or total) Boolean function, and $\epsilon \in [0,1]$ be a parameter. 
A \textbf{simultaneous message passing} protocol $P$ for $f$ involves three parties, Alice, Bob, and a referee. 
Alice receives $x\in \{0,1\}^n$ as input and Bob receives $y\in \{0,1\}^n$. Alice and Bob send the referee (quantum or classical) message systems $M_A$ and $M_B$ respectively, and the referee subsequently outputs a bit $c=P(x,y)$.

\textbf{Messages.} The messages that Alice and Bob send to the referee can be quantum, denoted by $\Qsim$ or classical, denoted by $\Rsim$. 

\textbf{Correctness.} The protocol is $\epsilon$-correct if for all $(x,y)$ in the support of $f$,
\begin{align*}
    \Pr[P(x,y)=f(x,y)] \geq 1-\epsilon \enspace.
\end{align*}
By default we assume $\epsilon=1/3$. 
Focusing on the case when Alice and Bob send classical messages and $\epsilon=0$, we obtain the deterministic model of classical simultaneous communication, denoted by $\mathsf{D}\|$.

\textbf{Cost of a protocol.} The cost of the protocol denoted by $\mathrm{cost}(P)$ is defined to be the total number of bits (resp. qubits) sent by Alice and Bob in the $\Rsim$ (resp. $\Qsim$) model. The $\Rsim_\epsilon$ complexity of $f$ is defined as follows
\begin{equation*}
    \Rsim_{\epsilon}(f) = \min_{P: P \text{ is $\epsilon$-correct}}\mathrm{cost}(P)
\end{equation*}
and the $\Qsim_\epsilon$ complexity is analogously defined. 

\textbf{Randomness.} Alice and Bob typically have private randomness, but we also consider a variation of the simultaneous message model where we allow public randomness. 
In particular, we allow all three players (Alice, Bob and the referee) to hold a shared random string $r$ of arbitrary length. 
They can then use $r$ as an input to their local operations. 
We label the cost to compute $f$ $\epsilon$-correctly in this model by $\Rsim^{\pub}_\epsilon(f)$ (resp. $\Qsim^{\pub}_\epsilon(f)$) when the messages are classical (resp. quantum).  

\textbf{Entanglement.} We may allow Alice and Bob to share entanglement, denoted by the superscript $*$ and resulting in the models $\Qent$ and $\Rent$ depending on whether the messages to the referee are quantum or classical. 
\end{definition}

Next we take up the private variations of these models. 

\begin{definition}\label{def:PSQM}
    A \textbf{private simultaneous message} task is defined by a choice of (partial or total) Boolean function $f:\{0,1\}^n\times \{0,1\}^n\rightarrow \{0,1\}$. Let $\epsilon, \delta \in [0,1]$ be parameters.
    The inputs to the task are $n$-bit strings $x$ and $y$ given to Alice and Bob, respectively.
    Alice then sends a message system $M_0$ to the referee, and Bob sends a message system $M_1$. 
    From the combined message system $M=M_0M_1$, the referee prepares an output bit $z$ whose system is denoted by $Z$.  
    We require the task be completed in a way that satisfies the following two properties.
    \begin{itemize}
        \item \textbf{$\epsilon$-correctness:} There exists a decoding map $\mathbfcal{V}_{M \rightarrow Z}$ such that, for all $(x,y)$ in the support of $f$, 
        \begin{align}
            \left \|\mathbfcal{V}_{M \rightarrow Z} \left(\rho_{M}(x,y)\right) - \ketbra{f_{x,y}}{f_{x,y}}_Z\right \|_1 \leq \epsilon
        \end{align}
        where $\rho_M(x,y)$ is the density matrix on $M$ produced on inputs $x,y$ and $f_{x,y}=f(x,y)$.
        \item \textbf{$\delta$-security:} There exists a simulator, which is a quantum channel $\mathbfcal{S}_{Z\rightarrow M}(\cdot)$, such that for all $(x,y)$ on which $f$ is defined 
        \begin{align}
            \left \|\rho_{M}(x,y) - \mathbfcal{S}_{Z\rightarrow M}(\ketbra{f_{x,y}}{f_{x,y}}_Z)\right \|_1 \leq \delta.
        \end{align}
        Stated differently, the state of the message systems is $\delta$-close to one that depends only on the function value, for every choice of input.
    \end{itemize}
   
    \textbf{Messages.} When the messages are quantum, we will refer to this model as $\PSQM$ and when the messages are classical, we refer to the model by $\PSM$.

    \textbf{Entanglement.} When Alice and Bob share entanglement, we denote it by the superscript $*$, obtaining the model $\PSQM^*$ when Alice and Bob send quantum messages and $\PSM^*$ when Alice and Bob send classical messages. 

    \textbf{Communication cost of a protocol.} The communication cost of the protocol is defined to be the total number of bits sent by Alice and Bob in the $\PSM$ or $\PSM^*$ models. 
    We denote the minimal cost over all $\epsilon=1/3$ correct, $\delta=1/3$ secure protocols by $\PSM(f)$ or $\PSM^*(f)$. 
    The communication cost measures $\PSQM(f)$ and $\PSQM^*(f)$ are defined similarly, now counting qubits of communication. 

    \textbf{Correlation cost of a protocol.} The correlation cost of the protocol is defined to be the log dimension of the quantum state shared among Alice and Bob at the start of the protocol, which counts both classical randomness and shared entanglement. We use an overline to denote the correlation cost in various models, $\overline{\PSM}(f)$, $\overline{\PSQM}(f)$, etc.
\end{definition}

If we enforce that the protocol is perfectly correct we add the suffix $\pc$; if we enforce that the protocol is perfectly private we add $\pp$. 
Thus for example the communication cost of perfectly correct PSQM$^*$ protocol for function $f$ is $\pc\PSQM^*(f)$. 

We also consider PSM settings where the input is split among $k$ parties, rather than two. 
The definition is similar to the above, but we replace the input space $X\times Y$ with $X_1\times ...\times X_k$, and have messages computed separately from each of the inputs. 
In this case we add a subscript $k$ to the model designation, for instance $\PSQM^*_k$ is $k$-player PSM with quantum communication and shared entanglement.
The $k$ players may share a $k$-partite entangled state.
To define the entanglement cost in this case, suppose that a protocol uses resource state ${\Psi}_{E_1E_2...E_k}$. 
Then we define the correlation cost to be the log dimension of this state. 
Note that we allow arbitrary isometries in the PSQM$^*$ protocol, so the correlation cost measure doesn't count any ancilla used.

\section{Lower bound from Ne\v{c}iporuk's measure}\label{sec:Neclowerbound}

In this section we develop a lower bound technique on correlation cost in the PSQM$^*$ model, in the setting where we give each of $k$ parties $n$ bits of the input. 
We start by developing the definition of Ne\v{c}iporuk’s measure, which counts the number of distinct functions that appear when restricting a given function to a subset of its inputs. 
Our strategy is adapted from an analogous classical setting \cite{ball2022note}. 

\subsection{Ne\v{c}iporuk's measure}

The variant of Ne\v{c}iporuk’s measure that appears in our lower bounds involves the following notion of a restriction of a $k$ input function. 

\begin{definition}\textbf{(First-bit restriction)}
For any function $f : \{\{0, 1\}^n\}^k \rightarrow \{0, 1\}$ and any set $S \subseteq [k]$, the first-bit restriction of $f$ to $S$ using $(\alpha,\beta)$ is the function 
\begin{align}
    f_{S|(\alpha,\beta)} : \{0, 1\}^{|S|} \rightarrow \{0, 1\}
\end{align}
defined by restricting the inputs as follows:
\begin{itemize}
    \item Fix the $n$-bit inputs corresponding to $\bar{S}$ to $\alpha\in \{\{0,1\}^n\}^{|\bar{S}|}$.
    \item Fix the last $n-1$ bits of each $n$-bit input corresponding to $S$ to the values described by $\beta\in\{\{0,1\}^{n-1}\}^{|S|}$.
\end{itemize}
\end{definition}

Now we can define the modified Nečiporuk’s measure. 

\begin{definition}\textbf{(Modified Nečiporuk’s measure)} 
Let $f : \{\{0, 1\}^n\}^k \rightarrow \{0, 1\}$ be a function.
For any subset $S \subseteq [k]$, define 
\begin{align}
    g^*_S(f) := \max_{\beta\in\{\{0,1\}^{n-1}\}^{|S|}} \log|\{f_{S|(\alpha,\beta)} : \alpha \in \{\{0, 1\}^n\}^{|\bar{S}|}, f_{S|(\alpha,\beta)} \not\equiv 0\}|.
\end{align}
For any positive integer $m \leq k$, let $V = (V_1, V_2, ..., V_m)$ denote an $m$-partition of $[k]$.
The Nečiporuk measure is defined to be $G^*(f) := \max_V\sum_{V_i\in V} g^*_{V_i}(f)$.
\end{definition}

We will eventually prove a lower bound on the $k$ party PSQM$^*$ complexity in terms of the Ne\v{c}iporuk measure. 
First, we need the following claim, which restricts the number of orthogonal states on the $AB$ Hilbert space assuming that all of the density matrices on $B$ (nearly) agree. 

\begin{lemma}\label{lemma:subsystemsizebound}
     Call $d_X$ the number of states $\{\rho^i_{AB}\}_i$ satisfying 
     \begin{align}
         F(\rho^i_{AB}, \rho^j_{AB}) = \delta_{ij} \,\,\,\,\,\,\text{and}\,\,\,\,\,\, \Vert \rho_{B}^i-\rho_B\Vert_1 \leq \delta
     \end{align}
     with $\rho_B$ a single fixed state. Then
     \begin{align}
         \log d_X \leq \frac{1}{1-3\delta/2}\left[2\log d_A + 2h(\delta/2) \right].
     \end{align}
     where $h(x)=(1+x)\log (1+x) - x\log x$.
\end{lemma}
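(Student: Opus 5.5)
We will encode a classical index into the family and bound the correlations it generates. Introduce a classical register $X$ of dimension $d_X$ and the classical–quantum state
\begin{align}
    \sigma_{XAB} = \frac{1}{d_X}\sum_{i} \ketbra{i}{i}_X \otimes \rho^i_{AB}.
\end{align}
Since $F(\rho^i_{AB},\rho^j_{AB})=0$ for $i\neq j$, the states have mutually orthogonal supports. Hence a projective measurement on $AB$ recovers $i$ perfectly, and the Holevo quantity is maximal:
\begin{align}
    I(X:AB)_\sigma = S\Big(\tfrac{1}{d_X}\textstyle\sum_i\rho^i_{AB}\Big) - \tfrac{1}{d_X}\textstyle\sum_i S(\rho^i_{AB}) = \log d_X .
\end{align}

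Next, use the chain rule to split this as $I(X:AB)=I(X:B)+I(X:A|B)$. The conditional term is controlled by the $A$ dimension in the standard way: $I(X:A|B)\leq 2\log d_A$, because $I(X:A|B)=S(A|B)-S(A|XB)$ and both conditional entropies lie in $[-\log d_A,\log d_A]$. It therefore remains to show that $I(X:B)_\sigma$ is small.

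For $I(X:B)$ we compare $\sigma_{XB}$ with the product state $\tau_{XB}=\frac{1}{d_X}\sum_i \ketbra{i}{i}\otimes\rho_B$, for which $I(X:B)_\tau=0$. By the hypothesis $\Vert\rho^i_B-\rho_B\Vert_1\leq\delta$ and convexity, $\Vert\sigma_{XB}-\tau_{XB}\Vert_1\leq\delta$, so \cref{thm:MIcontinuity} applies with $\epsilon=\delta/2$.

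The key choice is which subsystem plays the role of $A$ in \cref{thm:MIcontinuity}. If we put $X$ there, the dimension factor is $\log d_X$, and we get
\begin{align}
    I(X:B)_\sigma \leq \tfrac{3\delta}{2}\log d_X + 2h(\delta/2).
\end{align}
This is why the target has a $1-3\delta/2$ denominator. Combining the pieces gives
\begin{align}
    \log d_X \leq 2\log d_A + \tfrac{3\delta}{2}\log d_X + 2h(\delta/2).
\end{align}
Rearranging yields $\log d_X(1-3\delta/2)\leq 2\log d_A+2h(\delta/2)$, which is exactly the claimed bound when $\delta<2/3$ and is vacuous otherwise.

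The main obstacle is this continuity step. We must check that the Alicki–Fannes–Winter form stated in \cref{thm:MIcontinuity} may use the dimension of $X$ rather than $B$; this is allowed since mutual information is symmetric and the bound depends only on the dimension of one argument. We must also get the trace-distance normalization right: the theorem requires distance at most $2\epsilon$, so $\epsilon=\delta/2$. The orthogonality giving $I(X:AB)=\log d_X$ and the bound $I(X:A|B)\leq 2\log d_A$ are routine.
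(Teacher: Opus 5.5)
Your proof is correct and follows the paper's argument essentially step for step: the same classical--quantum state $\sigma_{XAB}$, the comparison of $\sigma_{XB}$ with the product state $\frac{1}{d_X}\sum_i \ketbra{i}{i}_X\otimes\rho_B$ via \cref{thm:MIcontinuity} at $\epsilon=\delta/2$ with the dimension factor taken on $X$, the identity $I(X:AB)_\sigma=\log d_X$ from orthogonal supports, and the chain rule with $I(X:A|B)_\sigma\leq 2\log d_A$. One wording point: when $\delta\geq 2/3$ it is your intermediate inequality $(1-3\delta/2)\log d_X\leq 2\log d_A+2h(\delta/2)$ that becomes vacuous, whereas the lemma's displayed bound is ill-defined or false there, so the statement should be read with the implicit hypothesis $\delta<2/3$, which the paper also leaves implicit.
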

\begin{proof}
Consider the states
\begin{align}
    \sigma_{XAB} &= \frac{1}{d_X} \sum_{i=1}^{d_X} \ketbra{i}{i}_X\otimes \rho_{AB}^i \nonumber \\
    \tilde{\sigma}_{XB} &= \frac{1}{d_X} \sum_{i=1}^{d_X} \ketbra{i}{i}_X\otimes \rho_{B}
\end{align}
Then we have
\begin{align}
    \Vert \sigma_{XB} - \tilde{\sigma}_{XB}\Vert_1 =\frac{1}{{d_X}}\sum_i \Vert \rho_{B}^i-\rho_B\Vert_1 \leq \delta
\end{align}
Using \cref{thm:MIcontinuity} and that $I(X:B)_{\tilde{\sigma}}=0$, we can bound the mutual information $I(X:B)_\sigma$,
\begin{align}
    I(X:B)_\sigma = I(X:B)_\sigma - I(X:B)_{\tilde{\sigma}} \leq \frac{3}{2}\delta \log d_X + 2h(\delta/2).
\end{align}
But also, since the $\rho_{AB}^i$ all have orthogonal support, we can measure $AB$ and determine $i$ so then also $I(AB:X)=\log d_X$.
But then we use
\begin{align}
    \log d_X &= I(AB:X)_{\sigma} \nonumber \\
            &= I(A:X|B)_{\sigma} + I(B:X)_{\sigma} \nonumber \\
            &\leq 2S(A)_{\sigma} + I(B:X)_{\sigma} \nonumber \\
            &\leq 2S(A) + \frac{3}{2}\delta \log d_X + 2h(\delta/2) \nonumber \\
            &\leq 2 \log d_A + \frac{3}{2}\delta \log d_X + 2h(\delta/2)
\end{align}
which can be re-arranged to the desired inequality.
\end{proof}

For intuition, consider the case where $\delta=0$ and $A$, $B$ are both qubits.
Then the bound says that any set of orthogonal states on $AB$ which all have the same marginal on $A$ must be of size at most $4$. 
We can saturate this bound by for instance choosing the four Bell states on $AB$, which are orthogonal but all have $\rho_B=\mathcal{I}_B/2$. 

\subsection{Lower bound from Nečiporuk’s measure}

\begin{figure*}
    \centering
    \begin{subfigure}{0.45\textwidth}
    \centering
    \begin{tikzpicture}[scale=0.4]
    
    \draw[thick] (-5,-5) -- (-5,-3) -- (-3,-3) -- (-3,-5) -- (-5,-5);
    \node at (-4,-4) {$S$};
    
    \draw[thick] (5,-5) -- (5,-3) -- (3,-3) -- (3,-5) -- (5,-5);
    \node at (4,-4) {$\bar{S}$};
    
    \draw[thick] (5,5) -- (5,3) -- (3,3) -- (3,5) -- (5,5);
    
    \draw[thick] (4,-3) -- (4.5,3);
    
    \draw[thick] (-4,-3) to [out=90,in=-90] (3.5,3);
    
    \draw[thick] (-3.5,-5) to [out=-90,in=-90] (3.5,-5);
    \node[below] at (0,-7.25) {$\Psi_{E_SE_{\bar{S}}}$};
    \draw[black] plot [mark=*, mark size=3] coordinates{(0,-7.05)};

    \node[left] at (0,1) {$M_S$};
    \node[right] at (4.5,0) {$M_{\bar{S}}$};
    
    \draw[thick] (-4.5,-6) -- (-4.5,-5);
    \node[below] at (-4.5,-6) {$(y,\beta)$};
    
    \draw[thick] (4.5,-6) -- (4.5,-5);
    \node[below] at (4.5,-6) {$\alpha$};
    
    \draw[thick] (4,5) -- (4,6);
    \node[above] at (4,6) {$f_{S|(\alpha,\beta)}(y)$};
    
    \end{tikzpicture}
    \caption{}
    \end{subfigure}
    \hfill
    \begin{subfigure}{0.45\textwidth}
    \centering
    \begin{tikzpicture}[scale=0.4]
    
    \draw[thick] (-5,-1) -- (-5,1) -- (-3,1) -- (-3,-1) -- (-5,-1);
    
    \draw[thick] (5,-5) -- (5,-3) -- (3,-3) -- (3,-5) -- (5,-5);
    
    \draw[thick] (1,5) -- (1,3) -- (-1,3) -- (-1,5) -- (1,5);
    
    \draw[thick] (4,-3) to [out=90,in=-90] (0.5,3);
    
    \draw[thick] (-4,1) to [out=90,in=-90] (-0.5,3);
    
    \draw[thick] (-3.5,-5) to [out=-90,in=-90] (3.5,-5);
    \draw[thick] (-3.5,-5) to (-3.5,-1);
    \node[below] at (0,-7.25) {$\Psi_{E_SE_{\bar{S}}}$};
    \draw[black] plot [mark=*, mark size=3] coordinates{(0,-7.05)};

    \node[left] at (-1.25,2.75) {$M_{S}$};
    \node[right] at (4,-2) {$M_{\bar{S}}$};
    
    \draw[thick] (4.5,-6) -- (4.5,-5);
    \node[below] at (4.5,-6) {$\alpha$};

    \node at (-2.75,-2.5) {$E_S$};
    \draw[thick] (-4.5,-3) -- (-4.5,-1);
    \node[below left] at (-4,-3) {$(y,\beta)$};
    
    \draw[thick] (0,5) -- (0,6);
    \node[above] at (0,6) {$f_{S|(\alpha,\beta)}(y)$};

    \draw[dashed] (-6,-1.75) -- (4,-1.75) -- (4,5.5) -- (-6,5.5) -- (-6,-1.75);
    
    \end{tikzpicture}
    \caption{}
    \end{subfigure}
    \caption{a) A $\PSQM^*$ protocol. The $n$ players have been divided into two subsets, $S$ and $\bar{S}$. The input to the $\bar{S}$ players is set to $\alpha$. The last $n-1$ bits of each of the players in $S$ is set to $\beta$, while the first bit inputs are free, and the players can choose any string $y$ to take as input. Correctness of the $\PSQM^*$ protocol gives that the referee can compute $f_{S|(\alpha,\beta)}(y)$ from the message systems $M_S$, $M_{\bar{S}}$. b) In the proof of \cref{thm:Neciporuklowerbound}, we observe that given $E_SM_{\bar{S}}$, $f_{S|(\alpha,\beta)}(y)$ can be computed for any value of $y$. By computing this reversibly, this can be repeated for all values of $y$. This means the function $f_{S|(\alpha,\beta)}$ is determined by $E_SM_{\bar{S}}$. }
    \label{fig:Neciporukproof}
\end{figure*}

Finally, we are ready to prove the quantum PSM lower bound. 
The overall strategy is as follows. 
We partition the inputs into $S$ and $\bar{S}$. 
The inputs to $\bar{S}$ are fixed to a string $\alpha\in \{\{0,1\}^{n}\}^{|\bar{S}|}$. 
The last $n-1$ bits of each input to the players in $S$ is fixed to $\beta\in\{\{0,1\}^{(n-1)}\}^{|S|}$, while the remaining inputs are taken to be $y\in\{0,1\}^{|S|}$. 
We observe that the message $M_{\bar{S}}$ from players in $\bar{S}$, along with the initial entanglement on players in $S$, call it $E_S$, can be used to compute $f_{S|(\alpha,\beta)}(y)$ for any choice of $y$.
See \cref{fig:Neciporukproof}. 
Further, when the protocol is perfectly correct, we can copy out the value of $y$, uncompute, and run forward again with a new choice of input. 
Thus in fact the state on $S\bar{S}$ determines all values of $f_{S|(\alpha,\beta)}$ and hence determines $\alpha$. 
Naively, that means $E_SM_{\bar{S}}$ needs to have a dimension lower bounded by the number of distinct functions. 
Alone, this isn't strong enough to get our bound, and instead we want to lower bound the dimension of $E_S$ alone. 
To achieve this, we use security to show that all the states on $E_SM_{\bar{S}}$ have density matrices that agree on $M_{\bar{S}}$, and then apply \cref{lemma:subsystemsizebound} to bound the dimension of $E_S$ alone. 

\begin{theorem}\label{thm:Neciporuklowerbound}
    Consider a perfectly correct PSQM$^*$ protocol for function $f$ with security parameter $\delta>0$. 
    Then, the correlation cost is lower bounded by
    \begin{align}
        \pc\overline{\PSQM}^*(f) \geq \frac{1}{2}(1-3\delta/2)G^*(f)- h(\delta/2)p
    \end{align}
    where $p$ is the number of subsets in the partition $V^*$ that maximizes the sum $\sum_{V_i\in V}g_{V_i}^*(f)$.
\end{theorem}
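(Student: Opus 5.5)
We fix the partition $V^*=(V_1,\dots,V_p)$ of $[k]$ that attains $G^*(f)$. For each part $S=V_i$ we lower bound $\log d_{E_S}$, where $E_S$ is the share of the resource state $\Psi_{E_1\dots E_k}$ held by the players in $S$. We then sum over $i$. Because the $E_{V_i}$ are disjoint subsystems, $\sum_i \log d_{E_{V_i}} = \log d_{E_1\cdots E_k}$, which is the correlation cost. It therefore suffices to show, for each $S$,
\begin{align}
\log d_{E_S} \geq \tfrac{1}{2}(1-3\delta/2)\, g^*_S(f) - h(\delta/2).
\end{align}

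Fix $S$ and the maximizing $\beta$. Let $\alpha_1,\dots,\alpha_{D}$ be representatives of the distinct nonzero functions $f_{S|(\alpha,\beta)}$, so $\log D = g^*_S(f)$. For each $\alpha_j$, let $\rho^{j}_{E_S M_{\bar S}}$ be the state obtained when the $\bar S$ players apply their isometries on input $\alpha_j$ and the $S$ players have not yet acted, purified so that the isometries are kept coherent. We first show that these states are pairwise orthogonal. For any $y\in\{0,1\}^{|S|}$, the players in $S$ can apply their isometry on $(y,\beta)$ to $E_S$. The referee's decoder then outputs $f_{S|(\alpha_j,\beta)}(y)$ with certainty, by perfect correctness. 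Since the output is deterministic, we can copy it coherently into a register, undo the decoder and the $S$ isometry, and restore the state exactly. Repeating this for every $y$ writes the whole truth table of $f_{S|(\alpha_j,\beta)}$ into registers by an operation on $E_S M_{\bar S}$ alone. The truth tables are distinct for distinct $j$, so a fixed channel maps the states to orthogonal outputs, and hence $F(\rho^i,\rho^j)=\delta_{ij}$.

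Next we use security to show that the marginals on $M_{\bar S}$ nearly agree. Choose $y_j$ with $f_{S|(\alpha_j,\beta)}(y_j)=1$; such a $y_j$ exists because the restricted function is nonzero. Security gives $\|\rho_M(\alpha_j,y_j,\beta)-\mathcal S(\ketbra{1}{1})\|_1\leq\delta$. Tracing out $M_S$ does not increase trace distance, and the marginal on $M_{\bar S}$ does not depend on the $S$ inputs by no-signalling. So every $\rho^j_{M_{\bar S}}$ is within $\delta$ of the fixed state $\rho_{M_{\bar S}} := \tr_{M_S}\mathcal S(\ketbra{1}{1})$. This is the reason the measure counts only nonzero restrictions. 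We then apply \cref{lemma:subsystemsizebound} with $A=E_S$ and $B=M_{\bar S}$, which gives $\log D\leq \frac{1}{1-3\delta/2}[2\log d_{E_S}+2h(\delta/2)]$. Rearranging yields the per-part bound, and summing over the $p$ parts yields the theorem.

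The main obstacle is the orthogonality step. It requires an argument that the coherent compute–copy–uncompute procedure works on the joint state of $E_S M_{\bar S}$, including any ancilla introduced by the isometries, and that it reconstructs the full truth table, which in turn requires exact correctness. A further point is that the $\bar S$ players' purifying ancillas must be assigned to $M_{\bar S}$ or discarded consistently. I would handle this by taking the referee's decoder to be a unitary followed by measurement of one qubit. Perfect correctness then means that the decoded state is an exact product with the output qubit, so copying that qubit causes no disturbance.
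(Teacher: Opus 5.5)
Your proposal is correct and follows the paper's proof essentially step for step: you fix the maximizing partition and $\beta$; you get orthogonality of the states on $E_S M_{\bar S}$ from perfect correctness and a compute--copy--uncompute loop over $y$; you get $\delta$-closeness of the $M_{\bar S}$ marginals from security on a $y$ with $f_{S|(\alpha,\beta)}(y)=1$ plus no-signalling; and you finish with \cref{lemma:subsystemsizebound} ($A=E_S$, $B=M_{\bar S}$) summed over the parts of $V^*$. Two small remarks: taking $\tr_{M_S}\mathcal{S}(\ketbra{1}{1})$ as the reference state genuinely gives closeness $\delta$, whereas the paper's pairwise comparison of two real message states only gives $2\delta$ under the simulator-based definition of security; and no purification is needed---the $\bar S$ players' leftover ancillas must be traced out rather than added to $M_{\bar S}$ (security says nothing about them), and compute--copy--uncompute works directly on the resulting mixed states, because perfect correctness holds for every pure state in their support.
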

\begin{proof}
    Consider a subset of players $S$ and the complement set $\bar{S}$.
    Considering the modified Ne\v{c}iporuk measure, fix $\beta$ to its maximizing value. 
    Then choose a set of distinct $\alpha$ that each lead to distinct first-bit restrictions $f_{S|(\alpha,\beta)}$.  
    Denote the remaining inputs to $f_{S|(\alpha,\beta)}$ as $y$. 
    Let the message system sent by $S$ be $M_S$, and the message system sent by $\bar{S}$ be $M_{\bar{S}}$. 
    Similarly, let the entangled resource systems held by $S$ and $\bar{S}$, before they act with their local operators, be $E_S$ and $E_{\bar{S}}$ respectively. 
    
    We claim that all the possible message density matrices (for each distinct input $y$) agree on $M_{\bar{S}}$, up to an error set by $\delta$. 
    This is because for all $\alpha, \alpha'$ the functions we get by restriction are still non-zero by assumption, so there exist choices of $y,y'$ such that $f_{S|(\alpha,\beta)}(y)=f_{S|(\alpha',\beta)}(y')$, so then by privacy 
    \begin{align}
        \Vert\rho_{M_SM_{\bar{S}}}(y,\alpha,\beta)-\rho_{M_SM_{\bar{S}}}(y',\alpha',\beta)\Vert_1\leq \delta.
    \end{align}
    But by causality the reduced density matrix on $M_{\bar{S}}$ can't depend on $y$, so we get
    \begin{align}
        \Vert\rho_{M_{\bar{S}}}(\alpha)-\rho_{M_{\bar{S}}}(\alpha')\Vert_1 \leq \delta
    \end{align}
    as claimed. 
    Define $\rho_{M_{\bar{S}}}(\alpha_0)\equiv \rho_{M_{\bar{S}}}$ for some fixed $\alpha_0$, so all $\rho_{M_{\bar{S}}}(\alpha)$ are $\delta$ close to a fixed density matrix.

    Next, we show that $\alpha$ can be determined from $E_S M_{\bar{S}}$, in other words we show that $F(\rho_{E_SM_{\bar{S}}}(\alpha,\beta), \rho_{E_SM_{\bar{S}}}(\alpha',\beta))=\delta_{\alpha,\alpha'}$. 
    To see this, consider a unitary extension of the operations of each of the $S$ players acting on $E_S$ and input $I$, followed by the referee's operations, to define a single unitary taking in $E_SM_{\bar{S}}$ and a copy of the inputs and producing
    \begin{align}
        U_{IE_SM_{\bar{S}}\rightarrow \mathcal{O}R}(\ketbra{y}{y}_I\otimes \rho_{E_SM_{\bar{S}}})U_{IE_SM_{\bar{S}}\rightarrow \mathcal{O}R}^\dagger = \ketbra{f_{S|(\alpha,\beta)}(y)}{f_{S|(\alpha,\beta)}(y)}_{\mathcal{O}}\otimes \sigma_R
    \end{align}
    for some state $\sigma_R$. We can CNOT the value of $f_{S|(\alpha,\beta)}(y)$ into a separate register, then invert the unitary to get what we started with.
    We then pick a new value of $y$ and repeat. 
    In this way we can determine the truth table of $f_{S|(\alpha,\beta)}$. 
    Since there is just one $\alpha$ that leads to $f_{S|(\alpha,\beta)}$ by assumption, this determines $\alpha$ as needed. 
    Since the $\alpha$ can be perfectly distinguished, we have $F(\rho_{E_SM_{\bar{S}}}(\alpha,\beta), \rho_{E_SM_{\bar{S}}}(\alpha',\beta))=\delta_{\alpha,\alpha'}$ as claimed. 

    Now, we apply \cref{lemma:subsystemsizebound}. 
    In our setting, $d_X$ is the total number of choices of $\alpha$, so $\log d_X$ is $g^*_S(f)$. 
    Thus we obtain
    \begin{align}
        \frac{1}{2}(1-3\delta/2)g^*_S(f) - h(\delta/2) \leq n_{E_S}.
    \end{align}
    We find that the entanglement held by $S$ must consist of at least $g^*_S(f)/2$ qubits, up to a multiplicative factor which is close to 1, and a small additive term. 
    Then we consider a partition $V=(V_1,...,V_m)$, use the above for each $V_i$, sum over the $V_i$, and optimize over partitions to get a lower bound in terms of $G^*(f)$. 
    Doing so leads to the claimed lower bound.
\end{proof}

We also remark that if $\PSQM^*$ protocols can be amplified, then the above lower bound technique can be adapted to apply to imperfectly correct protocols. 
To see why, suppose the $\PSQM^*$ protocol can be amplified in the sense that, with a factor of $\ell$ overhead in resources, we have $\epsilon\rightarrow 2^{-\ell}\epsilon$, and further that the security parameter is not increased by the amplification procedure. 
Then, we can consider a similar construction as used above, now applied to the amplified protocol. 
A complication arises when the referee wishes to determine $\alpha$ from $E_SM_{\bar{S}}$. 
To do so, they compute $f_{S|(\alpha,\beta)}(y)$ for a given value of $y$, measure the output qubit and store the value of $f_{S|(\alpha,\beta)}(y)$, then reverse the procedure to try and return to the initial state on $E_SM_{\bar{S}}$. 
With perfect correctness the measurement returns $f_{S|(\alpha,\beta)}(y)$ with probability 1 and doesn't change the state, so that the initial state on $E_SM_{\bar{S}}$ is reproduced perfectly. 
Repeating, the referee learns all the values $\{f_{S|(\alpha,\beta)}(y)\}_y$ correctly with probability 1. 
To ensure this works with probability of order 1 when the protocol is imperfectly correct, we need that each measurement fails to return $f_{S|(\alpha,\beta)}(y)$ with probability at most $2^{-|S|}$. 
Thus in the amplification step we should choose $k=\Theta(|S|)$. 
This means our lower bound is weakened by a factor of $|S|$, becoming,
\begin{align}
    n_{E_S} \gtrsim g_S^*(f)/|S|.
\end{align}
Thus on the total dimension of the entangled state we obtain a lower bound of $G^*(f)/\max_i |S_i|$ where the $\{S_i\}_i$ are the subsets in the optimizing partition used in computing $G^*(f)$. 

Many functions have large Ne\v{c}iporuk measure, and explicit functions can be identified with large Ne\v{c}iporuk measure as well. 
A review of this can be found in \cite{ball2022note}. 
To summarize briefly, we have that all functions have Ne\v{c}iporuk measure bounded by
\begin{align}
    G^*(f) \leq \frac{k^2n}{\log (kn)}.
\end{align}
Further, random functions nearly saturate this bound with high probability. 
Considering explicit functions, the $(n,k)$ set disjointness function ($DISJ_{n,k}$) and first-bit indirect storage access ($FISA_{n,k}$) function\footnote{See \cite{ball2022note}, section 4 for definitions of these functions.} have 
\begin{align}
    G^*(f)= \Omega\left(\frac{n^2k}{\log (kn)}\right)
\end{align}
which saturates the upper bound on $G^*$.
For these functions we obtain quadratic lower bounds on the entanglement cost of perfectly correct PSQM. 

\section{Rank lower bound}

We will give a lower bound strategy based on the rank of the communication matrix for the function $f(x,y)$.
This technique will only apply to the case where the PSM protocol is perfectly secure, but correctness errors are allowed.
This is similar to a technique given in \cite{asadi2024rank}, which deals with the related (quantum) CDS primitive \cite{GERTNER2000592,allerstorfer2024relating,asadi2025conditional}, but for quantum PSM the technique gives a stronger bound. 
In this section we deal with $2$-player PSM, but note that for $k$-players we can apply the technique to any partition of the players into two subsets. 

\subsection{Lower bound}

In this section we prove the rank lower bound on perfectly private PSM. 

\begin{theorem}\label{thm:ranklowerbound}
    Consider an $\epsilon$-correct, perfectly secure 2 party $PSQM^*$ protocol for the function $f:\{0,1\}^n\times \{0,1\}^n\rightarrow \{0,1\}$.
    Then
    \begin{align}
        \pp\overline{\PSQM}^*(f) \geq \frac{1}{4}\log \rank f - \frac{1}{4}
    \end{align}
    where $\rank(f)$ denotes the rank of the communication matrix for $f$, computed over the complex numbers. 
\end{theorem}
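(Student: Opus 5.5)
The plan is to sandwich the rank of a suitably chosen $2^n\times 2^n$ matrix. Perfect privacy will pin this matrix down in terms of $M_f$, and the structure of the protocol will bound its rank by a polynomial in the dimension of the shared state.

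\textbf{Setup.} Fix a protocol with resource state $\ket{\Psi}_{E_AE_B}$ of total dimension $D=2^{\overline{\PSQM}^*}$. Purify the players' operations as isometries $V_x: E_A\to M_0A'$ and $W_y: E_B\to M_1B'$. Set $\ket{\psi_{xy}}=(V_x\otimes W_y)\ket{\Psi}$. Let $\Pi$ be the POVM element of the decoding map $\mathbfcal{V}$ associated with output $1$, and define
\begin{align}
    N(x,y)=\bra{\psi_{xy}}\Pi_{M_0M_1}\otimes I_{A'B'}\ket{\psi_{xy}}.
\end{align}

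\textbf{Step 1 (privacy fixes $N$).} Perfect privacy means $\rho_M(x,y)=\sigma_{f(x,y)}$, where $\sigma_z=\mathbfcal{S}(\ketbra{z}{z})$. Hence $N(x,y)=p_0+(p_1-p_0)f(x,y)$ with $p_z=\tr(\Pi\sigma_z)$. By $\epsilon$-correctness $p_1-p_0\geq 1-\epsilon>0$, so $N=p_0J+(p_1-p_0)M_f$, where $J$ is the all-ones matrix. It follows that $\rank N\geq \rank f-1$, or more crudely $\rank f\leq 2\rank N$ whenever $\rank N\geq 1$. This is the only place correctness enters, and no amplification is needed, because privacy forces $N$ to be an exact affine function of $f$ rather than an approximation to it. The additive $-\tfrac14$ in the statement should come from this factor $2$.

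\textbf{Step 2 (rank bounded by the entanglement).} I aim to show $\rank N\leq D^4$, i.e.\ $(d_{E_A}d_{E_B})^4$, which combined with Step 1 gives $\log\rank f\leq 4\log D+1$. Writing $\Psi=\sum c_{ijkl}\ket{i}\bra{j}\otimes\ket{k}\bra{l}$ gives
\begin{align}
    N(x,y)=\sum_{ijkl}c_{ijkl}\,\tr\!\left[\Pi\left(\tr_{A'}V_x\ket{i}\bra{j}V_x^\dagger\otimes \tr_{B'}W_y\ket{k}\bra{l}W_y^\dagger\right)\right].
\end{align}
This has $D^2$ terms, but each term is still a bilinear form whose rank is controlled by the message dimension, not by $D$. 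So privacy must be used a second time to remove the dependence on $\Pi$ and the message size. Concretely, for non-constant $f$, privacy together with causality gives that $\rho_{M_0}(x)$ is independent of $x$ and $\rho_{M_1}(y)$ is independent of $y$, and that $\rho_{M_0M_1}(x,y)$ depends only on $f(x,y)$. Via Uhlmann's theorem, for each $z$ I will write every $\ket{\psi_{xy}}$ with $f(x,y)=z$ as $(I_M\otimes U^{xy}_{A'B'})\ket{\phi_z}$ for a fixed purification $\ket{\phi_z}$. The goal is to move all $(x,y)$-dependence onto operators acting on spaces of dimension at most $D$, namely the images of $E_A$ and $E_B$, since $V_x\ket{\Psi}$ lives in a $D$-dimensional subspace. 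Then $N(x,y)=\tr[(X_x\otimes Y_y)K]$ with $X_x$ and $Y_y$ in operator spaces of dimension $d_{E_A}^2d_{E_B}^2$ each, and the rank of such a bilinear form is at most the dimension of the smaller of these spaces. Allowing a factor from both the ket and the bra side gives $D^4$.

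\textbf{Main obstacle.} The obstacle is this second use of privacy: replacing the fixed but arbitrarily large operator $\Pi$ on $M_0M_1$ by an effective operator on the $D$-dimensional supports of $V_x(E_A)$ and $W_y(E_B)$ in a way that factorises over $x$ and $y$. My first attempt is to compress $\Pi$ to the joint support of $\sigma_0$ and $\sigma_1$, which has dimension at most $D$ since $M$ is produced by isometries from $E_AE_B$. I would then expand this compressed operator in a product operator basis on the two sides, using the Schmidt decomposition of $\Psi$ to bound the number of basis terms by $D^2$ per side. If that fails, the fallback is to run the same argument on the Gram matrix $\langle\psi_{x'y'}|\psi_{xy}\rangle$, which privacy also constrains, and extract the $D^4$ bound from its factorisation $\bra{\Psi}(V_{x'}^\dagger V_x\otimes W_{y'}^\dagger W_y)\ket{\Psi}$.
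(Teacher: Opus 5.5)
Your Step 1 is fine. Perfect privacy forces $\rho_M(x,y)=\sigma_{f(x,y)}$, so any fixed linear functional of the message state is affine in $f$, and correctness makes the slope nonzero. The gap is Step 2, which carries all the content of the theorem, and none of the routes you sketch closes it.

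The claim that the joint support of $\sigma_0,\sigma_1$ has dimension at most $D$ is false. Your isometries $V_x:E_A\to M_0A'$ have an environment that is traced out, and local ancillas are free in this model, so a player can always append a maximally mixed $m$-qubit register to its message. This preserves perfect privacy and correctness and leaves $D$ unchanged, while multiplying the support dimension by $2^m$. Compressing $\Pi$ to a Hilbert-space support can therefore only bound $\rank N$ in terms of message size, never in terms of $D$. The Uhlmann rewriting does not help either: it moves the $(x,y)$-dependence onto unitaries $U^{xy}$ on $A'B'$, which neither factorise over $x$ and $y$ nor act on small spaces. The fallback also fails, because privacy constrains only the reduced states on $M$, not the overlaps $\langle\psi_{x'y'}|\psi_{xy}\rangle$ of the purifications. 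Alice may keep a copy of $x$ in $A'$, which makes these overlaps vanish for $x\neq x'$ whatever $f$ is.

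The missing idea is to compress $\Pi$ in operator space rather than in Hilbert space. Since $\rho_M(x,y)\in\{\rho_0,\rho_1\}$ for every input, you may replace $\Pi$ by its Hilbert--Schmidt orthogonal projection $a\rho_0+b\rho_1$ onto $\mathrm{span}\{\rho_0,\rho_1\}$ without changing $N$. The point is that $\rho_0=\rho_M(x_0,y_0)$ and $\rho_1=\rho_M(x_1,y_1)$ are themselves message states, so they inherit the low operator-Schmidt rank. Purifying the resource to Schmidt rank $r$ gives $\rho_M(x,y)=\sum_{I=1}^{r^2}A^I(x)\otimes B^I(y)$, and hence
\[
\tr(\rho_M(x,y)\,\rho_0)=\sum_{I,J=1}^{r^2}\tr(A^I(x)A^J(x_0))\,\tr(B^I(y)B^J(y_0)).
\]
This is a sum of $r^4$ rank-one matrices in $(x,y)$, independent of the message dimension. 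The same holds with $\rho_1$, so $\rank N\le 2r^4$ and $\rank f\le 2r^4+1$, which is the scaling you were after.

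The paper's proof uses exactly this mechanism, in the form $f(x,y)=\alpha^{-1}\tr(\rho(x,y)-\rho_0)^2$ with $\alpha=\Vert\rho_1-\rho_0\Vert_2^2$. It expands $\tr(\rho(x,y)^2)$ and $\tr(\rho(x,y)\rho_0)$ into $r^4$ product terms each. With this substitution your Step 1 framework goes through unchanged.
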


\begin{proof}
Consider a PSQM$^*$ protocol. 
Let the density matrix describing the message systems $M_AM_B=M$ be denoted $\rho_M(x,y)$. 
For a perfectly secure PSQM protocol, there are just two possible density matrices that can be realized for different values of $(x,y)$; a 0 density matrix or a 1 density matrix, 
\begin{align}
    \rho_M(x,y)=\rho_0 \quad \text{if} \,\,\,\, f(x,y)=0,\nonumber \\
    \rho_M(x,y)=\rho_1 \quad \text{if}\,\,\,\, f(x,y)=1.
\end{align}
This follows because if $\rho(x,y)\neq \rho(x',y')$, the referee has some probability of distinguishing between $(x,y)$ and $(x',y')$, which shouldn't be the case if $f(x,y)=f(x',y')$. 

This lets us express $f(x,y)$ using the matrix valued function $\rho(x,y)$, in particular define
\begin{align}
    \alpha = \tr((\rho_1-\rho_0)^2) = \Vert \rho_1-\rho_0\Vert_2^2.
\end{align}
This is non-zero, because by correctness it must be the case that $\rho_0\neq \rho_1$.
Then we can notice
\begin{align}\label{eq:f-as-rho}
    f(x,y) = \frac{1}{\alpha} \tr(\rho(x,y)-\rho_0)^2.
\end{align}
To see why the above holds, note that when $f(x,y)=1$ we have $\rho(x,y)=\rho_1$, so the right hand side is 1 as needed. 
Meanwhile, if $f(x,y)=0$ then $\rho(x,y)=\rho_0$ so the above is 0, as needed. 

Next, we use an expression for $\rho(x,y)$ that is constrained by the amount of entanglement used in the protocol. 
Suppose that the resource state used in the protocol is $\Psi_{L'R}$, and purify this to
\begin{align}
    \ket{\Psi}_{LR}=\sum_{i=1}^{r} \sqrt{\lambda_i}\ket{i}_L\ket{i}_R. \label{eq:ent-purification}
\end{align}
We will lower bound the log-dimension of the purified state, which never need be larger than twice the unpurified log-dimension. 
Thus beginning with a mixed state and purifying will loosen our bound by at most a factor of $1/2$. 

The mid-protocol density matrix $\rho_{M_AM_B}(x,y)$ is produced by Alice and Bob acting separately on each end of the entangled state, so is of the form
\begin{align}
    \rho_{M_AM_B}(x,y) &=\sum_{i,j=1}^{r} \sqrt{\lambda_i\lambda_j} \mathcal{N}_{L\rightarrow M_A}^x(\ketbra{i}{j})\otimes \mathcal{N}_{R\rightarrow M_B}^y(\ketbra{i}{j}) \nonumber \\
    &= \sum_{i,j=1}^{r} A^{i,j}_{M_A}(x)\otimes B^{i,j}_{M_B}(y)\nonumber \\ 
    &= \sum_{I=1}^{r^2} A^{I}_{M_A}(x)\otimes B^{I}_{M_B}(y).
\end{align}
The second line defines the matrices $A^{i,j}_{M_A}(x), B^{i,j}_{M_B}(y)$. 
Pick any input $(x_0,y_0)$ with $f(x_0,y_0)=0$. 
Then
\begin{align}
    \rho_0 = \sum_{I=1}^{r^2} A^I_{M_A}(x_0)\otimes B^I_{M_B}(y_0)
\end{align}
and we can write $f(x,y)$ as
\begin{align}
    f(x,y) &= \frac{1}{\alpha} \tr(\rho(x,y)-\rho_0)^2 \nonumber \\
            &= \frac{1}{\alpha}\tr\left((\rho(x,y))^2 - 2\rho(x,y) \rho_0 + \rho_0^2 \right)\nonumber \\
           &= \frac{1}{\alpha} \tr \left( \sum_{I,J=1}^{r^2} A^I(x)A^J(x)\otimes B^I(y)B^J(y) - 2\sum_{I,J=1}^{r^2} A^I(x)A^J(x_0)\otimes B^I(y)B^J(y_0)\right. \nonumber \\
           &\qquad \qquad \qquad + \left.\sum_{I,J=1}^{r^2} A^I(x_0)A^J(x_0)\otimes B^I(y_0)B^J(y_0)\right) \nonumber \\
           &= \sum_{K=1}^{2r^4} f_{K}(x)f'_{K}(y)
\end{align}
where the last line defines two functions $f_K(x)$ and $f_K'(y)$. Therefore, we have that the rank of $f$ over the complex numbers is at most 
\begin{align}
    \rank(f) \leq 2r^4. \label{eq:rank-bounds-ent}
\end{align}
Since $r$ is the Schmidt rank of the purified state, $\log r$ lower bounds the dimension of $R$ and $L$, which we want to translate to a bound on the dimension of $\rho_{L'R}$. 
We know that
\begin{align}
    \log d_{LR} \leq 2 \log d_{L'R} \label{eq:ent-dimension-blowup}
\end{align}
and that $\log r \leq \log d_R, \log d_L$.
Combining these statements we have
\begin{align}
    \log r \leq \frac{1}{2}\log d_{LR} \leq \log d_{L'R} = \pp\overline{\PSQM}^*(f).
\end{align}
Finally, by \cref{eq:rank-bounds-ent} we have that $\log(r) \geq \frac{1}{4} \log\rank(f) - \frac{1}{4}$. Combining this with the previous equation we get
\begin{align}
    \boxed{\pp\overline{\PSQM}^*(f) \geq \frac{1}{4}\log \rank f - \frac{1}{4}.}
\end{align}
Here $\pp\overline{\PSQM}^*(f)$ is the minimal number of qubits of shared resource state needed to execute PSQM with perfect security. 
\end{proof}

Note that the rank lower bound does not apply to communication, but only to correlation cost.  
We could ask if the communication is also lower bounded by the rank, but considering the equality function shows such a bound cannot hold. 
To see this, consider that equality is full rank, so the log-rank is $n$. 
Using linear randomness in the resource, Alice and Bob can take $x\oplus r$ and $y\oplus r$ and then send a constant length hash of those strings. 
This is perfectly secure and approximately correct, so we can compute equality in the PSQM$^*$ model with perfect privacy and log communication, even while the log-rank is linear. 
Thus our bound cannot also apply to communication. 

\subsection{Comparison to related bounds}

We should also compare the rank lower bound obtained here to similar lower bounds. 
We can notice first that the communication cost $\PSQM^*(f)$ is lower bounded by two-way quantum communication complexity of $f$.
If we assume perfect correctness, this is lower bounded by the log-rank \cite{buhrman2001communication}. 
If we had assumed perfect correctness for our PSQM protocol then our lower bound would be immediate. 
Our result shows that perfect privacy (and relaxed correctness) also suffices to obtain a rank bound. 

Note that our bound also applies to classical PSM, since classical PSM is lower bounded by quantum PSM. 
As well, our lower bound uses privacy --- indeed our bound is on the shared randomness, which can be zero when there is no privacy requirement. 

Another point of comparison are the similar lower bounds on quantum CDS obtained in \cite{asadi2024rank}. 
There, the log Schmidt rank in perfectly private quantum CDS, denoted $\pp\CDQS$, is lower bounded according to
\begin{align}
    \pp\CDQS^*(f) &\geq \frac{1}{4}\log (\text{nrank}(f)).
\end{align}
The non-deterministic rank is the minimal rank of any matrix over the complex numbers which has the same zeros as the communication matrix of $f$. 
In general this may be smaller than the usual notion of rank. 
Because CDS lower bounds PSM \cite{allerstorfer2024relating}, we also obtain the same lower bound on perfectly private PSM, but since the nrank may be much smaller than the rank, our bound on PSM is stronger than the one inherited from CDS.

\section{Upper bound from \texorpdfstring{$T$}{T}-depth and quantum circuits}

In this section we prove new upper bounds on the $\PSM_k^*$ model. 
Our main strategy is to adapt techniques from non-local quantum computation \cite{speelman2015instantaneous}, which involves two separated players, to the $k$-player setting. 
This gives an upper bound on $\PSM^*_k(f)$ related to the $T$-depth of any unitary computing $f$. 
With some further modifications to this technique, we show it can be used to realize an upper bound on $\PSM^*_k$ based on the size and depth of a quantum circuit computing $f$. 

\subsection{Computational models}\label{sec:computationalmodels}

Recall that the $n$-qubit Pauli group $\mathcal{P}_n$ consists of $n$-fold tensor products of the four Pauli operators, $\{I,X,Y,Z\}$, with an added overall phase of $\pm 1$ or $\pm i$. 
A Clifford unitary $C$ is a unitary such that for any $P\in \mathcal{P}_n$, there exists another $P'\in \mathcal{P}_n$ such that $CPC^\dagger=P'$.

Clifford unitaries are a subgroup of all possible unitaries. 
To generate the full unitary group, it suffices to consider the $T$-gate, 
\begin{align}
    T=\begin{pmatrix} 1 & 0 \\ 0 & e^{i\pi/4}\end{pmatrix}
\end{align}
along with the Cliffords. 
$T$-gates are more difficult to apply in standard quantum computing architectures, and consequently understanding the number of $T$-gates required for a given computation has been a topic of interest. 

We will be interested in decompositions of unitaries into Clifford layers interlaid with $T$-gate layers, 
\begin{align}
    U=C_{d}\bar{T}_dC_{d-1}...C_1\bar{T}_1C_0.
\end{align}
The operators $\bar{T}_i$ consist of parallel $T$-gates on a subset $S_i$ of the qubits, and identity on the remaining qubits. 
The operators $C_i$ are Clifford. 
We call $d_T$ the $T$-depth of the circuit. 

We are also interested in restricted variants of the above decomposition into Clifford$+T$ layers. 
To specify these, we define a unitary $V$ to have a backward light-cone of size at most $\ell$ if, taking a single qubit operator $\mathcal{O}_i$, we have that
\begin{align}
    V^\dagger \mathcal{O}_i V = \tilde{\mathcal{O}}_i
\end{align}
has support on at most $\ell$ qubits. 
Note that we do not require any notion of geometric locality in these operators; we only consider the size of the non-trivial support. 
Returning to Clifford$+T$ decompositions, we say that a Clifford$+T$ decomposition is $\ell$-local if every Clifford $C_i$ has backward light-cones of size at most $\ell$. 

\subsection{Speelman's instantaneous computation technique}

To prove our upper bounds in this section, we borrow techniques developed in the context of non-local quantum computation (NLQC) \cite{speelman2015instantaneous}. 
In both NLQC and in the context of PSM protocols, it is useful to be able to implement the following transformation. 

\begin{definition}
    An \emph{instantaneous computation} of a unitary $U_{AB_1...B_{k-1}}$ is the following transformation. 
    $k$ players, whom we call Alice and Bob$_1$, ..., Bob$_{k-1}$ each hold one of the systems $A$, $B_1$, ..., $B_{k-1}$.
    The choice of unitary $U_{AB_1...B_{k-1}}$ is known to all players.
    We say that Alice and Bob instantaneously implement $U_{AB_1...B_{k-1}}$ if they implement the transformation
    \begin{align}
        \ket{\psi}_{AB'}\rightarrow P_{AB'}[m_a,m_b^1,...m_{b}^{k-1}]U_{AB'}\ket{\psi}_{AB'}
    \end{align}
    where $AB'=AB_1...B_{k-1}$ is held by Alice at the end of the protocol, $P[m_a,m_b]$ is a Pauli string determined by $m_a,m_b$, Alice holds $m_a$, and Bob$_i$ holds $m_b^i$. 
\end{definition}

The entanglement cost of instantaneous computations can be upper bounded in terms of the complexity of the unitary $U_{AB'}$. 
Consider a decomposition of $U_{AB'}$ into Clifford layers and layers of $T$-gates, 
\begin{align}
    U=C_{d_{T}} \bar{T} C_{d_{T}-1} \bar{T} C_{d_{T}-2}... C_{1} \bar{T}C_0.
\end{align}
If we allow arbitrary Cliffords $C_i$ at each layer, and consider only 2 players Alice and Bob$_1$, \cite{speelman2015instantaneous} shows that $U$ can be computed instantaneously using $O((K_1n)^{d_T})$ shared EPR pairs where $d_T$ is the $T$-depth, $n$ is the number of qubits of input, and $K_1$ is a constant. 
We will need an extension of this result. 

Specifically, we consider the case where 1) We allow $k\geq 2$ players. 2) The Clifford layers are further restricted, to only allow Cliffords with backward light cones of size $\ell$. 
This is equivalent to the statement that the $i$th output qubit of the Clifford is only influenced by at most $\ell$ input qubits. 

\begin{theorem}\label{thm:unitary-restricted-clifford-t-depth}
Suppose that a unitary $U_{AB_1...B_{k-1}}$ on $\ntot$ qubits is of the form 
\begin{align}
    U=C_{d_{T}}^\ell \bar{T} C^\ell_{d_{T}-1} \bar{T} C_{d_{T}-2}^\ell... C_{1}^\ell \bar{T}C_0^\ell.
\end{align}
where each $C^\ell_i$ is a Clifford unitary with backward light cones of size at most $\ell$. 
Then $U_{AB}$ can be computed instantaneously using $E(U)$ shared EPR pairs, where
\begin{align}
    E(U)\leq O((K\ell)^{d_T-1}\cdot \ntot \cdot \min\{k,\ell\}).
\end{align}
Here $K$ is a constant independent of the choice of unitary. 
\end{theorem}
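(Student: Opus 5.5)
We will follow Speelman's garden-hose style construction layer by layer, keeping one player (Alice) as the ``holder'' of the computation and all other players as ``correctors''. At the start, each Bob$_i$ teleports his register $B_i$ to Alice using $|B_i|$ EPR pairs, so that Alice holds $U$'s full input up to a Pauli $P_0$ whose key is spread among the Bobs; this costs $\ntot$ EPR pairs in total. The invariant after processing layer $j$ is that Alice holds $C_j\bar T\cdots C_0\ket{\psi}$ up to a Pauli $P_j$. Each Pauli bit is a known function (a parity) of classical data held by individual players, since Clifford conjugation maps Paulis to Paulis and teleportation measurement outcomes are local. Alice applies Clifford layers directly. A $T$-layer is handled by the identity $TX^aZ^b=X^aZ^{a\oplus b}P^{a}T$ up to phase. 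The problem is therefore to apply the correction $P^{a}$ on each qubit hit by a $T$-gate, where $a$ is the current $X$-key bit.

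The locality assumption enters through the correction step. After a layer $C^\ell_j$, the $X$-key bit of output qubit $q$ depends only on the Pauli keys of the at most $\ell$ input qubits in the backward light cone of $q$. Unwinding, that key bit at $T$-layer $j$ is a parity of at most about $\ell^{j}$ earlier key bits. These come from the original teleportation outcomes together with the outcomes of earlier correction rounds. The earlier correction rounds contribute new key bits that live at the Bobs' sides, each being a function of few variables. To apply $P^{a}$ with $a=\bigoplus_i a_i$, where $a_i$ is the part known to player $i$, we use the standard port/garden-hose gadget. Alice teleports the qubit through a chain of EPR pairs to the players contributing to $a$, and each player applies $P^{a_i}$ (or $P^{\dagger}$-type corrections absorbed via $P^2=Z$) before teleporting back. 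Since $P$ commutes with $Z$ and up to Pauli with $X$ terms, the corrections compose into Pauli-frame updates. The number of players involved in one qubit's key is at most $\min\{k,\ell\}$ at the first layer. This gives the factor $\ntot\min\{k,\ell\}$ for the first $T$-layer.

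For the recursion, we count EPR pairs per $T$-layer. Each correction round multiplies the number of ``key sources'' per qubit by a constant $K$, coming from the teleportation hops and from $P$ generating additional $Z$ terms. Each subsequent Clifford layer multiplies it by at most $\ell$, via the light cone. Hence the key of a qubit at layer $j$ involves $O((K\ell)^{j-1}\min\{k,\ell\})$ sources, and the cost of handling it is linear in that number. Summing over layers $j\le d_T$ gives a geometric sum dominated by the last term, $O((K\ell)^{d_T-1}\ntot\min\{k,\ell\})$. The final Clifford $C_{d_T}$ and the output Pauli frame $P[m_a,m_b^1,\dots]$ then match the definition of instantaneous computation.

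The main obstacle is the bookkeeping in the correction gadget. Speelman's approach handles the non-commutation of $P$ with the Pauli frame by sending the qubit to whoever knows the bit and then back. With many players and a parity split across them, we must show that sequentially routing through each contributing player costs only a constant factor per source. We must also show that the new key bits created are localized to the same light-cone-bounded set of qubits and do not leak across unrelated qubits. We would first try to treat each player's share $a_i$ as a separate conditional $P^{a_i}$ applied in sequence, using $P^{a}P^{b}=P^{a\oplus b}Z^{ab}$, and absorb the $Z^{ab}$ cross terms into the Pauli frame at the cost of the constant $K$.
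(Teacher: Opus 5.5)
Your argument breaks down after the first $T$-layer. The correction step as you describe it (each player applies $P^{a_i}$ with $a_i$ known locally) needs the control bit $a$ to be a parity of bits each held by a single player. That holds at the first $T$-layer, but your own bookkeeping destroys it. When Bob$_i$ applies $P^{a_i}$ to a qubit whose $X$-frame bit $u$ comes from teleportation outcomes held by other players, the frame picks up $Z^{u a_i}$. Merging shares via $P^{a}P^{b}=P^{a\oplus b}Z^{ab}$ adds $Z^{a_i a_j}$. These are products of bits at different players, and no single player knows them. The next Clifford (e.g.\ a Hadamard) can move such $Z$-frame bits into the $X$-key, so at the second $T$-layer ``each player applies $P^{a_i}$'' is not even defined. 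If you instead condition on a product $u\cdot v$ by input-dependent routing, the new frame bits become path-dependent, since only measurements actually on the path contribute. The nonlinearity then compounds layer after layer.

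For the same reason your potential function fails. ``Number of key sources'' with ``cost linear in that number'' is valid only for parities of locally held bits. The cost of conditioning a gate on a general distributed function is not linear in its number of variables, so the claimed per-round factor $K$ is unjustified. You flag exactly this bookkeeping as the open ``main obstacle'', and it is the heart of the proof.

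The missing idea, which the paper uses, is to measure each frame bit by its $k$-party garden-hose complexity $GH_k$. This measure stays closed under both operations in the recursion:
\begin{itemize}
\item \cref{lemma:XORlemma} gives $GH_k(\bigoplus_i f_i)\le 4\sum_i GH_k(f_i)$. So the light-cone step, a parity of at most $2\ell$ previous frame bits, costs $O(\ell t_i)$.
\item \cref{lemma:GHgrowth} applies $(S^\dagger)^{f}$ for an \emph{arbitrary} $f$. It runs a garden-hose protocol for $f$ on the qubit, applies $S^\dagger$ only at Alice's $f=1$ exit, and runs a second copy to merge the two exits. This uses $2GH_k(f)$ EPR pairs. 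The resulting frame bits satisfy $GH_k(g)\le 4GH_k(f)$ and $GH_k(h)\le 11GH_k(f)$; the $h$ bound absorbs precisely your cross term $f\cdot(\text{$X$-outcomes before } S^\dagger)$.
\end{itemize}
Together these give $t_{i+1}\le K\ell\, t_i$ with $t_1=O(\min\{k,\ell\})$, and summing $\ntot\, t_i$ over the layers yields the bound. Your sequential chain is fine for the first layer. What you lack is a controlled-$S^\dagger$ gadget for general distributed control functions, together with a complexity measure that makes the induction close.
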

We prove this theorem in Appendix \ref{sec:kplayerSpeelman}.

Next we proceed to apply this result to $\PSM^*_k$ upper bounds. 

\subsection{\texorpdfstring{$T$}{T}-depth and circuit depth \texorpdfstring{$\PSM_k^*$}{PSMk*} upper bounds}

We first of all show the following $\PSM^*$ upper bound from the $T$-depth. 

\begin{theorem}\label{thm:restrictedCliffordTdepth}
Suppose that a Boolean function $f:X_1\times...\times X_k\rightarrow \{0,1\}$, $X_i=\{0,1\}^n$ can be computed from the inputs $\ket{x_1,...,x_k}$ and advice state $\ket{\psi}$ $\epsilon$-correctly by a unitary of the form
\begin{align}
    U=C_{d_{T}}^\ell \bar{T} C^\ell_{d_{T}-1} \bar{T} C_{d_{T}-2}^\ell... C_{1}^\ell \bar{T}C_0^\ell.
\end{align}
with Clifford layers $C_{i}^\ell$ restricted to Cliffords with light cones of size at most $\ell$. 
The total number of qubits of input plus advice is denoted $\ntot$.
Then there is an $\epsilon$-correct, $\delta=2\epsilon$ secure $\PSM^*$ protocol for $f$ with communication cost
\begin{align}
    \PSM_k^*(f)\leq O((K\ell)^{d_T-1}\cdot \ntot \cdot \min\{k,\ell\})
\end{align}
and which uses a number of EPR pairs satisfying the same upper bound.
\end{theorem}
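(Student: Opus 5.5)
Reduce to \cref{thm:unitary-restricted-clifford-t-depth}. Each player already holds its input register $\ket{x_i}$. We will argue that an instantaneous computation of $U$ produces a message system that is, up to a uniformly random Pauli, just $U\ket{x_1,\dots,x_k}\ket{\psi}$. The referee undoes the Pauli on the output qubit only, using classical information it receives.

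\textbf{Setup.} The advice state $\ket{\psi}$ is input-independent, so it can be pre-shared as part of the resource state, say held by player 1. Each player then holds its share of the $\ntot$ qubits. The players run the $k$-player instantaneous computation protocol of \cref{thm:unitary-restricted-clifford-t-depth}.

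\textbf{Messages.} The protocol is a Speelman-style garden-hose/teleportation scheme in which all operations are local and the only interaction is the final exchange of classical measurement outcomes. We convert it into simultaneous messages as follows. The quantum system on which $U$ ends up, which is held by one designated player, is measured in the computational basis. Since a Pauli frame $P[m]$ acts on computational-basis measurements as an $X$-pattern, i.e.\ an XOR mask on the outcome, this measurement yields a classical string. That player sends the measured output bit together with its part of the Pauli-correction data. Every other player sends only the part of its classical record $m_b^i$ needed to compute the $X$-correction on the output qubit. The referee XORs these to recover the output. All messages are classical, and the EPR count and total communication are both bounded by $E(U)$, since each teleportation produces $O(1)$ bits of outcome per EPR pair.

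\textbf{Correctness and privacy.} Correctness is inherited: the corrected output bit has exactly the distribution of measuring the first qubit of $U\ket{x}\ket{\psi}$, which is $\epsilon$-correct.

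For privacy, the messages must be checked to be simulatable from $f(x)$ alone. The key fact is the one-time-pad property of teleportation: in Speelman's protocol, every teleportation outcome not yet corrected is uniformly random and independent of the state. The referee receives the measured output bit XORed with the frame bit $b$, plus the shares of $b$. We want to send only what determines $b$, as a secret sharing among players; extra randomization can be added by having the players share a uniform random XOR-mask of these shares, with $r_i$ summing to $0$.

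With this arrangement the joint message distribution is: uniformly random shares conditioned on their XOR equalling the output bit $z$. The simulator samples $z$ from the correct-output distribution near $f(x)$, i.e.\ outputs $f(x)$, then generates random shares. The real distribution differs from the simulator's only by the correctness error $\epsilon$ on $z$. Accounting for $\epsilon$ in both the real bit's deviation and the triangle inequality gives $\delta=2\epsilon$.

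\textbf{Main obstacle.} The delicate step is that the garden-hose protocol's classical records contain more than the final frame: the outcomes determine which $T$-correction branch was used. A naive message could therefore leak input information through the correlations of intermediate outcomes.

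The first approach to try is to avoid sending intermediate records at all. Each player locally computes, from its own records, only its XOR share of the final output qubit's $X$-frame. This is possible because the final Pauli frame is a known affine function of all players' outcomes, given the protocol structure of \cref{thm:unitary-restricted-clifford-t-depth}. If the frame is not additively separable across players in that construction, the fallback is to note that the frame after the last Clifford layer is Pauli. Its $X$-component on the output qubit is then an XOR over contributions from the teleportation rounds, each locally known, and the random re-masking above hides everything but the total.
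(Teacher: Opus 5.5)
Your proof breaks at the step that both your privacy argument and your decoding rely on: that each player can compute, from its own records alone, an XOR share of the $X$-frame bit $p$ on the output qubit. Your fallback asserts the same thing when it calls each contribution ``locally known''. In the construction behind \cref{thm:unitary-restricted-clifford-t-depth} this is false. A Bell outcome enters a qubit's frame only if that measurement lies on the qubit's teleportation chain, and the chain is fixed by hose connections that \emph{other} players choose from their own earlier outcomes. You can see this in \cref{lemma:GHgrowth}: $g$ sums the $b_x^{i,j}$ only over the chain, and $h$ contains the cross term $f(\cdot)\sum_{(i,j)\in B} b_x^{i,j}$, where $f$ and the $b_x^{i,j}$ generally depend on data held by different players. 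A product $a\cdot b$, with $a$ held by one player and $b$ by another, is not of the form $F_1\oplus F_2$ with each $F_i$ a function of player $i$'s data. Setting $a=0$ forces $F_2$ to be constant in $b$, and then $a=1$ gives a contradiction. Re-masking with shared $r_i$ satisfying $\bigoplus_i r_i=0$ cancels in the XOR and does not help. This non-locality is also why the correction complexities $t_i$ grow from layer to layer instead of staying those of an XOR of $k$ local bits. So in general each player must send its whole record, which is exactly the case you set aside as possibly leaky.

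The missing observation is that the whole record is harmless, and this is the paper's route. The inputs enter only as the quantum registers $\ket{x_i}$. Every classical choice in the instantaneous computation of the fixed unitary $U$ (which qubits to Bell-measure, how to wire the garden-hose corrections) depends only on earlier measurement outcomes. Each Bell measurement pairs at most one data-carrying qubit with a qubit of the EPR network, so its outcome distribution, given all earlier outcomes, does not depend on the data. Hence the joint distribution of the full record $r$ is independent of $x$ (the paper takes it to be uniform). Conditioned on $r$, the measured bit is $s=z\oplus p(r)$, with $z$ distributed as the ideal circuit's output and $p(r)$ computable by the referee from $r$. The paper therefore has every player send all of $m_a, m_{b_1},\dots,m_{b_{k-1}}$ together with $s$. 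The simulator samples $r$ and outputs $X^{p(r)}\ket{f}$, the trace distance is at most $2\Pr[z\neq f]\leq 2\epsilon$, and the communication is $O(E(U))$. The one-time-pad fact you already state gives this directly. The branch structure cannot leak anything about the inputs because it is driven by outcomes, not inputs, so the secret-sharing step should simply be dropped.
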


\begin{proof}
Let the unitary computing $f(x)$ be computed by applying a unitary $U_{AB_1...B_{k-1}}$ to the inputs and advice state followed by a computational basis measurement on the first output qubit. 
By assumption the circuit computes $f$ with probability $1-\epsilon$. 

The $\PSM^*$ protocol is as follows. 
The players execute the instantaneous computation for $U$, which uses entanglement upper bounded as in \cref{thm:unitary-restricted-clifford-t-depth}. 
At this point, Alice holds the output of $U$, which has been corrupted by a Pauli string $P[m_a, m_{b_1},...,m_{b_{k-1}}]$ with the $b_i$ held by Bob$_i$. 
Alice proceeds to measure the first qubit, obtaining outcome $s$. 
All of the players then send the measurement outcomes $m_a,m_{b_i}$ produced during the execution of the instantaneous computation protocol. 

To see that this protocol is $\epsilon$ correct, notice that the Pauli operator acting on the measured qubit will not change the measurement outcome if it is an $I$ or $Z$ operator, and will invert the outcome if it is an $X$ or $Y$ operator. 
Thus if the referee receives this final measurement outcome along with the strings $m_a, m_{b_1},...m_{b_{k-1}}$, the referee can determine which Pauli acted on the measured qubit and inverts the result if it is $X$ or $Y$, then the resulting bit will be distributed just as if it were the outcome from the original circuit with no Pauli corrections, so will equal $f(x,y)$ with probability at least $1-\epsilon$. 

Next we consider security. 
In the instantaneous computation protocol, all of the bits sent to the referee were from Bell basis measurements, call them $r=(r_1,...,r_k)$, except one bit $s$, which came from Alice measuring the final output qubit. 
The Bell basis measurement outcomes $r$ are distributed as a uniformly random bit-string in $\{0,1\}^{|r|}$. 
To design a simulator, consider that the message is of the form
\begin{align}
    \rho_M(x,y)&=\frac{1}{2^{|r|}}\sum_r X^{p(r)} \sigma(x,y)X^{p(r)}\otimes \ketbra{r}{r}, \nonumber \\
    \sigma(x,y) &= \alpha(x,y) \ketbra{f}{f} + (1-\alpha(x,y)) \ketbra{f\oplus 1}{f\oplus 1}.
\end{align}
Here $p(r)$ is a function which determines if there is a Pauli $X$ correction on the measured qubit. 
The probabilities $\alpha(x,y)$ can in general leak information about $(x,y)$, but we have that $\alpha(x,y) \geq 1-\epsilon$ for all $(x,y)$ which will ensure this leaked information is small. 
In particular we define the simulator distribution to be
\begin{align}
    \text{Sim}(f) = \frac{1}{2^{|r|}}\sum_r X^{p(r)} \ketbra{f}{f}  X^{p(r)}\otimes \ketbra{r}{r}.
\end{align}
Then to check security, we just need to calculate the trace distance between the message distribution and the simulator distribution, 
\begin{align}
    \Vert \rho_M(x,y) - \text{Sim}_M(f) \Vert_1 &= \vabs{\frac{1}{2^{|r|}}\sum_r X^{p(r)}(\sigma(x,y)-\ketbra{f}{f})X^{p(r)}\otimes \ketbra{r}{r} }_1 \nonumber \\
    &= \frac{1}{2^{|r|}}\sum_r \Vert \sigma(x,y)-\ketbra{f}{f} \Vert_1 \nonumber \\
    &= \frac{1}{2^{|r|}} \sum_r \|(\alpha(x,y)-1)\ketbra{f}{f} + (1-\alpha(x,y))\ketbra{f\oplus 1}{f\oplus 1} \|_1\nonumber \\
    &\leq \frac{1}{2^{|r|}} \sum_r 2 |1-\alpha(x,y)| \nonumber \\
    &\leq 2\epsilon
\end{align}
so that the protocol is $\delta=2\epsilon$ secure, as claimed. 
\end{proof}

As a special case of this result, it follows that low-depth quantum circuits, with gates of constant arity, provide a good upper bound on $\PSM^*$ complexity. 

\begin{corollary}
Suppose that a Boolean function $f:X_1\times...\times X_k\rightarrow \{0,1\}$, $X_i=\{0,1\}^n$ can be computed from the inputs $\ket{x_1,...,x_k}$ by a depth $d_f$, size $s$ quantum circuit (composed of gates acting on $O(1)$ qubits). 
Then, there is an $\eps$-correct, $\delta = 2\eps$-secure $\PSM^*$ protocol for $f$ with communication cost
\begin{align}
    \PSM_k^*(f) \leq (kn +s) \cdot \log^{O( d_f)}(s/\eps)
\end{align}
and which uses a number of EPR pairs satisfying the same upper bound.

\end{corollary}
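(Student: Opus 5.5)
The corollary follows from \cref{thm:restrictedCliffordTdepth}. We compile the depth-$d_f$ circuit into a Clifford$+T$ circuit whose $T$-depth is small, whose Clifford layers have small backward light cones, and whose total width is only polylogarithmically larger than $kn+s$. We then read off the cost.

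\textbf{Step 1: approximating each gate.} The circuit has $s$ gates, each acting on $c=O(1)$ qubits. By Solovay--Kitaev (or any Clifford$+T$ synthesis result), each gate can be replaced by a Clifford$+T$ circuit on $c$ qubits, possibly with $O(1)$ ancillas, with operator-norm error $\eps/s$. Such a circuit has length, and hence $T$-depth, at most $\mathrm{polylog}(s/\eps) = \log^{O(1)}(s/\eps)$. By the triangle inequality the total error is at most $\eps$. The success probability on the measured output qubit therefore changes by $O(\eps)$, so after rescaling $\eps$ the compiled circuit is still $O(\eps)$-correct.

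\textbf{Step 2: organizing into layers.} Each of the $d_f$ layers of the original circuit consists of gates on disjoint sets of qubits. We compile all gates in a layer in parallel and pad with identities so that their $T$-layers align. One original layer then becomes an alternating sequence of Clifford and $T$ layers with $T$-depth $L=\log^{O(1)}(s/\eps)$. Each Clifford layer in this sequence is a tensor product of Cliffords acting on blocks of $O(1)$ qubits, so its backward light cone has size $\ell=O(1)$. Consecutive Clifford layers, including those at the boundaries between original layers, can be merged, giving light cone at most $O(1)\cdot O(1)=O(1)$. The whole circuit then has the form required by \cref{thm:restrictedCliffordTdepth} with $d_T\le d_f\cdot L = d_f\log^{O(1)}(s/\eps)$ and $\ell=O(1)$. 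The advice state is empty, and the width is $\ntot = kn + O(s)$ once ancillas are included.

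\textbf{Step 3: applying the theorem, and the obstacle.} \cref{thm:restrictedCliffordTdepth} gives cost $O\bigl((K\ell)^{d_T-1}\ntot\min\{k,\ell\}\bigr)$, which here is $2^{O(d_T)}(kn+s)$. This is the main obstacle: it equals $2^{O(d_f\,\mathrm{polylog}(s/\eps))}$, which is much worse than the claimed $\log^{O(d_f)}(s/\eps)$. The exponential dependence on $T$-depth must therefore be avoided within each compiled gate.

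I would handle this by treating each original layer as a single ``super-layer'' in the proof of \cref{thm:unitary-restricted-clifford-t-depth}, rather than counting its internal $T$-gates separately. In Speelman's garden-hose/teleportation argument, each layer of blocks of $O(1)$ qubits must be applied under an unknown Pauli frame. For a fixed $c$-qubit gate $G$, the conditional correction $P'=G P G^\dagger$ ranges over only $4^{c}=O(1)$ possibilities for the incoming Pauli $P$. The players can therefore port-teleport or hose the block into one of $O(1)$ branches and apply $G$ together with the matching correction. The cost per block per layer is then multiplied by a factor depending on the circuit length of the gate, which is at most $\log^{O(1)}(s/\eps)$, rather than by $K^{L}$.

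The resulting per-layer blow-up is $\log^{O(1)}(s/\eps)$. Iterating over $d_f$ layers gives $(kn+s)\log^{O(d_f)}(s/\eps)$. Concretely, I would re-run the inductive entanglement count of the appendix proof with the per-layer multiplier $K\ell$ replaced by $\mathrm{poly}(\text{gate circuit length})$. Verifying that this replacement is valid is the step I expect to need the most care.
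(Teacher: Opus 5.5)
There is a genuine gap, and it sits in Step 3, which carries the whole difficulty. You correctly see that Steps 1--2 plus \cref{thm:restrictedCliffordTdepth} only give $2^{O(d_T)}$ with $d_T=d_f\,\mathrm{polylog}(s/\eps)$. Even optimal ancilla-free synthesis, with $\Theta(\log(s/\eps))$ $T$ gates per rotation, would only reach $(s/\eps)^{O(d_f)}$. Your repair is to rerun the appendix induction with one ``super-layer'' per original layer and a per-layer factor $\mathrm{poly}(L)$ instead of $K^{L}$. That claim is asserted, not proved, and the mechanism you sketch does not work. Routing the block through garden-hose/teleportation branches according to the incoming Pauli $P$ deposits a fresh, unknown Pauli $Q$ from the Bell measurements along the path, before Alice can apply $GP$ in the chosen branch. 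She is then left with $GQ\ket{\psi}=(GQG^\dagger)G\ket{\psi}$, and $GQG^\dagger$ is not a Pauli when $G$ is non-Clifford. Routing after applying $G$ fails for the same reason: you get $V_P^\dagger Q V_P G\ket{\psi}$ with $V_P=GPG^\dagger$. The gadget in \cref{lemma:GHgrowth} works only because the gate applied mid-path is the Clifford $S^\dagger$, through which path Paulis pass as Paulis. For a compiled gate with $L$ $T$-layers, this forces one round of that gadget per $T$-layer, which is exactly where $K^{L}$ comes from. Nothing in your sketch removes this. The passing mention of port-based teleportation is not developed, and $P$ is a distributed function that no single party knows.

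The missing idea is that the bound $(K\ell)^{d_T-1}$ is exponential in $d_T$ but only polynomial in $\ell$. So you should compile to $O(1)$ $T$-depth per gate and pay with a larger light cone, leaving the appendix untouched. The paper does this with catalytic rotations:
\begin{itemize}
\item Write each $b$-qubit gate exactly as $O(4^b)$ CNOTs and $Z$/$Y$ rotations.
\item Approximate each $Z$ rotation to error $\eta$ in $T$-depth $1$, using Kim's catalyst on $O(\log(1/\eta))$ qubits.
\item Reuse that catalyst sequentially within the gate, with one catalyst per gate.
\end{itemize}
Taking $\eta=\Theta(\eps/(s4^b))$ gives $d_T=O(d_f)$. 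Each Clifford layer acts only on a gate's qubits plus its catalyst, so $\ell=O(\log(s/\eps))$. The advice state has $O(s\log(s/\eps))$ qubits, and Alice prepares it herself. Then \cref{thm:restrictedCliffordTdepth} applies verbatim and yields $(K\ell)^{d_T-1}\,(kn+O(s\log(s/\eps)))\min\{k,\ell\}=(kn+s)\log^{O(d_f)}(s/\eps)$. The $\eps$-correctness and $2\eps$-security are inherited from that theorem.
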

\begin{proof}
First we convert the gates from the given quantum circuit to the Clifford + T gate set that consists only of one and two-qubit gates, in particular our gate set consists of: Hadamard, Phase, T, and CNOT. Using Solovay-Kitaev \cite{kitaev1997quantum,nielsen2010quantum} we could do this with a depth overhead factor of $\log(s/\eps)$ where $s$ is the size of the circuit. However, we will do better than this by making use of an additional \emph{catalytic} \cite{kim2025catalytic} advice state. The catalytic advice state is provided as additional input to the circuit (does not depend on the input) and after the computation it must be returned to its original form so can be reused. Surprisingly, Kim showed that any Pauli $Z$-rotation can be $\eta$-approximated in depth 3 if a certain catalyst state is available, which has size $O(\log^2(1/\eta))$ \cite{kim2025catalytic}. Kim and Laakkonen later improved this to be depth 1 with size $O(\log(1/\eta))$ advice \cite{kim2025catalytic,kim2025any}\footnote{Kim notes in v2 of \cite{kim2025catalytic} that after the release of the first version of the paper, Craig Gidney observed that it could be done in T-depth 2 \cite{gidney_x_1936285631359197210}.}. We will use this construction to convert our circuit to a Clifford + T circuit with low T-depth. 

For each gate in the original circuit we do the following. 
Let $b = O(1)$ be the bound on the number of qubits of fan-in the circuit's gates have. 
It's well-known that we can implement any unitary of dimension $2^b$ exactly with a circuit of size $4^{b}$ using only arbitrary single qubit rotations about $Z$ and $Y$, and CNOT gates \cite{mottonen2004quantum}. 
Note that the Y-rotation gate is just a Z-rotation gate conjugated by the single-qubit Clifford gate, $SH$. 
Thus, we can rewrite the gate with $O(4^b)$ gates consisting only of Clifford and $Z$-rotations. 
Using Kim's construction \cite{kim2025catalytic,kim2025any} we then approximate each of the $O(4^b)$ $Z$-rotation gates to within approximation error $\eta$, which only requires $T$-depth 1 for each such $Z$-rotation. 
We implement each of the gates sequentially so that we can reuse the catalytic state. Now the overall T-depth for implementing this gate is $O(4^b)$, uses a catalytic state on $O( \log(1/\eta))$ qubits, and has approximation error $\eta \cdot O(4^b)$.

We do the above for each of the gates. 
Let $s$ be the total number of gates in the original circuit. 
So our constructed Clifford + T circuit then has T-depth $d_T := O(d_f \cdot 4^b)$, approximation error $\eps = O(s \eta 4^b)$ and requires a catalytic advice state on $O(s \cdot \log(1/\eta))$ qubits. 

Furthermore, note that each of the Clifford components act on at most $O(b + \log(1/\eta))$ qubits: the original $b$ input qubits and the $\log(1/\eta)$ catalytic advice qubits. So each of the Clifford layers has lightcones bounded by $\ell := O(b +  \log(1/\eta))$.

We set $\eta = \Theta(\eps s^{-1} 4^{-b})$ such that the total approximation error is $\eps$, the catalytic state is on $a:= O(s \cdot \log(s 4^b / \eps)) = O(s b \log(s/\eps))$ qubits, and the lightcones are bounded by $\ell = O(b\log(s/\eps))$.

Let $U$ be the unitary on $\ntot' := (\ntot + a) = nk + O(s b \log(s/\eps))$ qubits that implements this circuit, so it takes as input the input to the original circuit ($nk$-qubits) in addition to the ($a$-qubit) catalytic advice state, and it computes $f$ to within approximation error~$\eps$. Using that $\ntot = k n$ we get the claimed bound.

We can implement this unitary instantaneously using \cref{thm:unitary-restricted-clifford-t-depth}. Now, Alice will receive $n$ bits of input in addition to the $a$ qubits of advice.  Applying \cref{thm:unitary-restricted-clifford-t-depth} with $\ntot'$ as the total number of qubits, we see that there is an instantaneous protocol for implementing $U$ with the number of EPR pairs used at most
\begin{align}
    E(U)&\leq O((K\ell)^{d_T-1}\cdot \ntot' \cdot \min\{k,\ell\})\\
    &= (b\log(s/\eps))^{O(d_f 4^b)} \cdot (\ntot + O(sb \log (s/\eps))) \cdot \min\{k, b\log(s/\eps) \} \\
    &\leq (b\log(s/\eps))^{O(d_f 4^b)} \cdot (\ntot+s)\\
    &\leq (\ntot+s) \cdot \log^{c\cdot d_f}(s/\eps)
\end{align}
for some constant $c>0$. In the last line we used that $b = O(1)$.

Now for the $\PSM_k^*$ protocol, when each player is given the classical $n$-bit input, Alice can prepare the catalytic advice state herself, the players can apply the above protocol.
Furthermore, following the same argument as in the proof of \cref{thm:restrictedCliffordTdepth}, this protocol is $\eps$-correct and $2\eps$-secure, and both communication cost and number of EPR pairs at most $(\ntot+s) \cdot \log^{c\cdot d_f}(s/\eps)$. 

\end{proof}

\section{Upper bound from the Fourier 1 norm}\label{sec:fourierupperbound}

In this section we give an upper bound on classical PSM complexity based on the one-norm of the Fourier transformation of $f$. 
This upper bound appears to be new to the classical literature. 
As we discuss, the Fourier one-norm captures how the function $f$ is related to parity functions. 

\subsection{Fourier transform for Boolean functions and the Fourier one-norm}\label{sec:fourier}

We briefly introduce the Fourier transform of a Boolean function. 
Let $x,S$ be $n$ bit strings. 
We think of $S$ as labelling a subset of the bits, where the $i$th bit of $S$ is $1$ iff bit $i$ is included in the subset. 
Then define the parity function
\begin{align}
    \chi_S(x)=(-1)^{S\cdot x} = (-1)^{\sum_{i\in S}x_i}
\end{align}
which is $-1$ if the parity of the bits $x_i$ in the subset $S$ is odd, and $+1$ otherwise. 
We study a Fourier transform of $f$, which can also be understood as expressing $f(x)$ as a sum over parity functions.
Specifically, the Fourier transform $\hat{f}(S)$ is defined such that
\begin{align}
    f(x)=\sum_S \hat{f}(S)\chi_S(x)
\end{align}
Alternatively, we can use that 
\begin{align}
    \frac{1}{2^n}\sum_x \chi_S(x)\chi_T(x)= \delta_{S,T}
\end{align}
to express $\hat{f}(S)$ in terms of $f(x)$, 
\begin{align}
    \hat{f}(S)=\frac{1}{2^n}\sum_x f(x) \chi_S(x).
\end{align}

Our upper bound will be expressed in terms of the Fourier 1 norm, which given a function $f$ is defined from the Fourier coefficients $\hat{f}(S)$ according to
\begin{align}
    \Vert \hat{f} \Vert_1 = \sum_{S} |\hat{f}(S)|
\end{align}
The Fourier 1 norm captures, in a weighted sense, how many Fourier coefficients are needed to represent $f(x)$. 

\subsection{Upper Bound}

\begin{theorem}\label{thm:psm_fourier_norm}
    For any function $f:X\times Y\rightarrow \{0,1\}$, and any $\delta>0$, the function $f$ can be computed $\delta$-correctly and perfectly securely in the PSM model using communication and randomness both of $O(\|\hat{f} \|_1^2 \ln(2/\delta) )$. 
\end{theorem}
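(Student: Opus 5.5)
The plan is to make Grolmusz's random-parity protocol for the SMP model private. The key point is that the referee only ever sees the \emph{value} of a randomly sampled signed parity, never which parity was sampled. Write $z=(x,y)\in\{0,1\}^{2n}$ and $L=\|\hat f\|_1$. Write each $S$ as $S=(S_X,S_Y)$, so that $\chi_S(z)=\chi_{S_X}(x)\chi_{S_Y}(y)$. If $f\equiv 0$ the statement is trivial, so assume $L>0$.

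Consider the distribution $\mu(S)=|\hat f(S)|/L$ and the random variable $v=\mathrm{sgn}(\hat f(S))\,\chi_S(z)\in\{\pm1\}$ with $S\sim\mu$. Then $\mathbb{E}_S[v]=\frac1L\sum_S \hat f(S)\chi_S(z)=f(z)/L$. Since $v$ takes values in $\{\pm 1\}$, its law is determined by its mean. So $\Pr[v=1]=\tfrac12(1+f(z)/L)$ depends on $(x,y)$ only through $f(x,y)$. This observation drives the privacy argument.

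\textbf{Protocol.} Set $m=\lceil 8L^2\ln(2/\delta)\rceil$. For each $i=1,\dots,m$, the shared randomness contains an independent $S^{(i)}\sim\mu$ and a uniform bit $r_i$. Let $s_i\in\{0,1\}$ encode $\mathrm{sgn}(\hat f(S^{(i)}))$.
\begin{itemize}
\item Alice sends $a_i=S^{(i)}_X\cdot x\oplus r_i$.
\item Bob sends $b_i=S^{(i)}_Y\cdot y\oplus s_i\oplus r_i$.
\end{itemize}
The referee computes $w_i=a_i\oplus b_i$, which is exactly the bit encoding $v_i$. Communication is $2m=O(L^2\ln(2/\delta))$ bits.

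\textbf{Privacy.} Conditioned on the inputs, the pairs $(a_i,b_i)$ are independent across $i$. Each pair equals $(u_i,u_i\oplus w_i)$, where $u_i$ is uniform and independent of $w_i$, and $w_i$ is distributed as $v$ above. So the joint message distribution is a fixed function of $f(x,y)$ alone. The simulator samples $w_i$ with $\Pr[w_i \text{ encodes } +1]=\tfrac12(1+f/L)$ and outputs $(u_i,u_i\oplus w_i)$. This gives $\delta$-security with $\delta=0$, i.e.\ perfect security.

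\textbf{Correctness.} The referee outputs $1$ iff $\frac1m\sum_i v_i\ge \frac{1}{2L}$. The two cases have means $0$ and $1/L$, so they differ by $1/L$. Hoeffding's inequality bounds the error probability by $2\exp(-m/(8L^2))\le\delta$ for the chosen $m$, giving $\delta$-correctness.

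\textbf{Main obstacle: the randomness count.} The bits $r_i$ cost only $m$, but describing $S^{(i)}$ naively costs up to $2n$ bits each. To meet the claimed $O(L^2\ln(2/\delta))$ bound I would try two things.
\begin{itemize}
\item First, check whether the intended randomness measure counts only the masking bits. The sampling of $S^{(i)}$ could equally be viewed as public coins independent of the input, which the referee never needs.
\item Failing that, try to reduce the entropy needed to sample from $\mu$, for instance by restricting to the support of $\hat f$. If this cannot be achieved, the proof gives the stated communication bound together with randomness $O(L^2\ln(2/\delta)\cdot n)$.
\end{itemize}
The privacy computation itself is routine once the marginal-law observation is in place.
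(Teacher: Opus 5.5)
Your protocol and its analysis are essentially the paper's. The paper also has the players use shared randomness to sample $S\sim p$ and a uniform mask $b$. Alice sends $b\,\chi_S(x)$ and Bob sends $b\,\chi_S(y)\,\mathrm{sign}(\hat f(S))$. The scheme is repeated $\Theta(\|\hat f\|_1^2\ln(2/\delta))$ times and analysed with Hoeffding. Perfect privacy comes from exactly your observation: each message pair is a uniform bit together with an independent $\pm1$ bit of mean $f(x,y)/\|\hat f\|_1$. Your threshold at $1/(2L)$ is the right way to handle the $\{0,1\}$ range. The paper's step ``the sum is positive iff $f(x,y)=1$'' tacitly treats $f$ as $\pm1$-valued.

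The randomness obstacle you flag is real, and the paper's proof does not address it: it never counts the bits used to sample $S$. Nor can it be overcome, because the randomness part of the statement is false.

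Take $f(x,y)=[x=y]$ on $n$-bit strings. Then $\hat f(S,T)=2^{-n}[S=T]$, so $\|\hat f\|_1=1$, and the statement promises $O(\ln(2/\delta))$ randomness independent of $n$. Now suppose the shared randomness has support size $N$. The two messages are independent conditioned on its value, so for every message pair $(a,b)$ the $2^n\times 2^n$ matrix $\bigl(\Pr[a,b\mid x,y]\bigr)_{x,y}$ is a sum of at most $N$ rank-one matrices. Perfect privacy forces this matrix to equal $\mathrm{Sim}_0(a,b)J+(\mathrm{Sim}_1(a,b)-\mathrm{Sim}_0(a,b))I$, with $J$ the all-ones matrix. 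Nontrivial correctness forces $\mathrm{Sim}_0(a,b)\neq\mathrm{Sim}_1(a,b)$ for some $(a,b)$, so that matrix has rank at least $2^n-1$, whence $N\ge 2^n-1$. This is the classical case of the paper's own \cref{thm:ranklowerbound}, which gives the weaker bound $(n-1)/4$.

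So neither of your escape routes works. The $S^{(i)}$ cannot be shown to the referee: from $a_i\oplus b_i$ it would then learn $\chi_{S^{(i)}}(x,y)$ for a known $S^{(i)}$ (for equality, the parity of $x\oplus y$ on a known random set). If the $S^{(i)}$ stay hidden, they are shared correlation, which the paper's correlation-cost definition explicitly counts. And no cleverer sampler for $\mu$, indeed no perfectly private protocol at all, can go below $\log(2^n-1)$.

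What you and the paper actually prove is $\delta$-correctness, perfect privacy, communication $O(\|\hat f\|_1^2\ln(2/\delta))$, and randomness $m\log(2|\mathrm{supp}\,\hat f|)=O(\|\hat f\|_1^2\ln(2/\delta)\cdot n)$. That is precisely your fallback. The bound $\PSM(f)\le O(\|\hat f\|_1^2)$ advertised in the introduction concerns communication only and is unaffected.
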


\begin{proof} The starting point for the proof is the observation that $f(x,y)$ can be expressed as the expected value of a random variable from a certain fixed distribution depending on the function $f$. In more detail, define a probability distribution $p$ from the Fourier coefficients of $f$, 
\begin{align}
    p_S=\frac{|\hat{f}(S)|}{\Vert \hat{f} \Vert_1}.
\end{align}

Observe that $f(x,y)$ can be expressed in terms of the expectation of the random variable $ \text{sign}(\hat{f}(S))\cdot \chi_S(x,y)$ under this distribution. 
In more detail, 
\begin{align}
\begin{split} f(x,y)&=\sum_S \hat{f}(S)\chi_S(x,y)\\
&=\|\hat{f}\|_1\cdot \sum_{S} p_S\cdot   \text{sign}(\hat{f}(S))\cdot \chi_S(x,y)\\ &=\|\hat{f}\|_1\cdot \mathbb{E}_{S\sim p}[ \text{sign}(\hat{f}(S))\cdot \chi_S(x,y)].\end{split}
\end{align}
Next, we note that $\chi_{S}(x,y)=\chi_S(x)\chi_S(y)$, as can be checked from its definition.
Thus,  
\begin{align}\label{eq:bias}
   \frac{f(x,y)}{\|\hat{f}\|_1}=\mathbb{E}_{S\sim p}[\text{sign}(\hat{f}(S))\cdot \chi_S(x)\cdot \chi_S(y)].
\end{align}
This naturally motivates a strategy for Alice and Bob as follows. 
Firstly, using shared private randomness, they sample a set $S\sim p$ according to the aforementioned distribution $p$, secondly, they sample a uniformly random bit $b\in \{-1,1\}$. 
Then, Alice sends $\chi_S(x)\cdot b$ to the referee and Bob sends $\chi_S(y)\cdot b\cdot \text{sign}(\hat{f}(S))$. 
We then have the referee multiply these two quantities to get $\text{sign}(\hat{f}(S))\cdot \chi_S(x,y)$.

Firstly, we observe that the expected output of the referee is precisely $f(x,y)/\|\hat{f}\|_1$ from \cref{eq:bias}. 
Furthermore, we will see that this protocol is perfectly private. But first, we would like to amplify the success probability to get perfect correctness. 
To determine whether $f(x,y)$ is $1$ or $-1$ with probability $1-\delta$, the idea is to have the players repeat this protocol $m=10 \|\hat{f}\|_1^2\ln(2/\delta)$ times in parallel (using independent randomness) and have the referee output $1$ or $-1$ if the sum of all his outputs is positive or negative respectively. 
We now prove that this described protocol is correct. 
To do so, observe that expected sum of the outputs is $m\cdot f(x,y)/\|\hat{f}\|_1$. 
As each output is $\{\pm 1\}$-valued, by Hoeffding's inequality, the probability that the sum differs from its expectation by more than $t$ is at most
$2\exp\left(-t^2/2m\right)$, which is at most $\delta$ for $t=\sqrt{2m \ln(2/\delta)}$. 
Thus, with all but $\delta$ probability, the sum of all outputs is roughly $m\cdot f(x,y)/\|\hat{f}\|_1\pm \sqrt{2m\ln(2/\delta)}$ which is roughly $10\|\hat{f}\|_1\ln (2/\delta)\cdot f(x,y)\pm 5\|\hat{f}\|_1\ln(2/\delta)$ by our choice of $m$. 
As the first term dominates, it is clear that this quantity is positive if and only if $f(x,y)=1$. 
This completes the proof of correctness of the amplified protocol. 

Perfect privacy of the original (not repeated) protocol, implies perfect privacy of the repeated protocol, so all that remains is to show that the original protocol is perfectly private. 
Recall that the referee receives a bit $b_1\in \{-1,1\}$ from Alice and a bit $b_2\in\{-1,1\}$ from Bob, where $b_1=b\cdot \chi_S(x)$ for a uniformly random bit $b\in\{-1,1\}$ and $b_2=b\cdot \chi_S(y)\cdot \text{sign}(\hat{f}(S))$. 
The simulator for this protocol is as follows: Given $f(x,y)$, the simulator samples a uniformly random bit $b\in \{-1,1\}$ and an independent random bit $c\in \{-1,1\}$ with bias given by $f(x,y)/\|\hat{f}\|_1$ and outputs $b_1':=b$ for Alice's message and $b_2':=b\cdot c$ for Bob's message. 
We will show that the resulting distribution on $(b_1',b_2')$ is identical to that of $(b_1,b_2)$ in the original protocol. 
We will instead argue that the distribution of $(b_1',b_1'\cdot b_2')$ is identical to that of $(b_1,b_1\cdot b_2)$ and since $b_1,b_1',b_2,b_2'\in\{-1,1\}$, this suffices. 
Firstly, observe that these are product distributions as the random variables $b_1\cdot b_2$ and $b_1'\cdot b_2'$ are independent of $b_1$ and $b_1'$ respectively. 
Thus, it suffices to argue that the marginal distributions are identical. 
Firstly, it is clear that $b_1$ and $b_1'$ are uniformly random bits. Secondly, observe that $b_1'\cdot b_2'$ is by definition a random bit in $\{-1,1\}$ with bias $f(x,y)/\|\hat{f}\|_1$. 
In the original protocol, we observe that $b_1\cdot b_2$ is a bit obtained by sampling $S\sim p$ and returning $\chi_S(x,y)\cdot \text{sign}(\hat{f}(S))$ -- this is simply a random bit whose expectation is $f(x,y)/\|\hat{f}\|_1$ by \cref{eq:bias}. 
This completes the proof.
\end{proof}

\appendix

\section{\texorpdfstring{$k$}{k}-player instantaneous computations based on the \texorpdfstring{$T$}{T}-depth}\label{sec:kplayerSpeelman}

In this appendix we give the proof of \cref{thm:unitary-restricted-clifford-t-depth}. 
Our techniques follow \cite{speelman2015instantaneous}, but we make some generalizations to achieve the needed bound. 

\subsection{The garden-hose model}

To give the proof, we first need to describe the garden-hose model and some associated instantaneous computation strategies. 
The garden-hose model \cite{buhrman2013garden} is most easily described in terms of the following setting. 
Alice and Bob are neighbours, and share a fence. 
Alice has an input string $x\in\{0,1\}^n$, while Bob has an input string $y\in\{0,1\}^n$. 
Alice has a tap, which she can turn on to produce a flow of water. 
Alice and Bob share a number of pipes which connect their yards, and they have hoses that they can use to connect pipes to one another, or the tap to a pipe. 
Alice and Bob wish to compute a Boolean function $f(x,y)$, with the outcome determined by where the water spills. 
Typically, the model is defined so that water spilling on Alice's side indicates $f(x,y)=0$, while water spilling on Bob's side indicates $f(x,y)=1$. 
The garden-hose model can also be formalized in terms of path connectivity in a particular form of graph, see \cite{buhrman2013garden}.
We won't need this formalization though, and stick to the more informal water-based description. 

We modify the definition of the garden-hose model somewhat and specify two pipes labelled ``$f=0$'' and ``$f=1$'', and require the water to spill on Alice's side out of the corresponding pipe.
In fact, these models are not too different, as the following lemma from \cite{klauck2014new} shows. 
\begin{lemma}
    Suppose there is a garden-hose protocol that computes $f(x,y)$ using $m$ pipes in the sense that water spills on Alice's side if $f(x,y)=0$, and on Bob's side if $f(x,y)=1$. 
    Then there is also a garden-hose protocol that computes $f(x,y)$ in the sense of water spilling on Alice's side from one of two designated pipes that uses at most $3m+1$ pipes. 
\end{lemma}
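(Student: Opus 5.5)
We convert the given protocol $P$ (water exits on Alice's side iff $f=0$, on Bob's side iff $f=1$, using $m$ pipes) into a designated-pipe protocol by running $P$ forward, then running a mirrored copy of $P$ backward to bring the water back to Alice. The construction needs three blocks of pipes. The first block, $\mathcal{B}_1$ with $m$ pipes, hosts a copy of $P$. The second block, $\mathcal{B}_2$ with $m$ pipes, hosts a copy $P'$ of $P$ in which Alice's and Bob's roles are exchanged. The third block, $\mathcal{B}_3$ with $m$ pipes, hosts a second copy $P''$ of $P$ in its original orientation. One extra pipe serves as one of the output pipes.

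\emph{Step 1 (first run).} Alice attaches the tap and her hoses exactly as in $P$ on block $\mathcal{B}_1$, and Bob does the same with his. The key fact is that a garden-hose path is determined only by the fixed hose connections. In $P$, the water leaves through a unique pipe end on which the last player to touch it left no hose. On Alice's side this is some open end $a_j$, $j$ depending on $x$ and $y$; on Bob's side it is some open end $b_j$.

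\emph{Step 2 (redirecting Bob's outcome).} In $P$, each of Bob's open pipe ends lies at a definite position $j\in[m]$. In the new protocol Bob connects each open end $j$ of $\mathcal{B}_1$ by a hose to the end of pipe $j$ of $\mathcal{B}_2$. Because $P'$ is $P$ reflected, water entering $\mathcal{B}_2$ at pipe $j$ on Bob's side traverses the path of the first run in reverse, with the roles of the ends swapped. It therefore retraces that path and comes out at the reflected position of Alice's tap. We make the tap end of $P'$ a pipe end on Alice's side, and Alice attaches it to the designated ``$f=1$'' pipe.

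\emph{Step 3 (Alice's outcome and reversibility).} Water that first exits on Alice's side at some $a_j$ must end at ``$f=0$''. Alice could hose all such ends together, but hoses only connect pairs of ends, so the set of possible exits cannot simply be merged. This is the role of $\mathcal{B}_3$. Alice connects each open end $a_j$ of $\mathcal{B}_1$ to pipe $j$ of $\mathcal{B}_3$. On $\mathcal{B}_3$, Alice and Bob lay their hoses as in the time-reversal of $P$, so the water retraces its path back to the tap position of that copy. That position connects to the ``$f=0$'' output pipe, and this is the extra $+1$ pipe.

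The pipe count is $3m+1$. The main obstacle is the reversibility argument. One must check carefully that reversing a path in a copy of $P$ from an exit end always returns to the tap end, since distinct exits belong to distinct paths and a garden-hose path is a simple path in a graph of maximum degree $2$. One must also check that the water never leaks into unused ends of the copies. Both points follow from the maximum-degree-$2$ structure.
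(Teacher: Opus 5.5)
Your overall mechanism is the right one: run $P$ once, then send each side's possible exit ends into a fresh copy of $P$ so that the water retraces its path back to the pipe fed by the tap. (The paper gives no proof of its own; it imports this lemma from \cite{klauck2014new}.) However, two steps fail as written.

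The first is the construction of $\mathcal{B}_2$. It cannot be a copy ``with Alice's and Bob's roles exchanged.'' Each player can only lay hoses on their own side, and only as a function of their own input, so nobody can realize Alice's $x$-dependent hoses of $P$ on Bob's side. The retracing claim also fails for such a copy. Bob's hose from $b_j$ would have to attach to the end of $\mathcal{B}_2$ that plays the role of Alice's end of pipe $j$. That end necessarily carries a hose (or the tap) in $P$, because the water reached $b_j$ through pipe $j$ from it. In addition, the ``reflected tap position'' would lie on Bob's side.

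The fix is to make $\mathcal{B}_2$ an identical copy of $P$, which is exactly what you already use for $\mathcal{B}_3$. Hoses are undirected, so the ``time-reversal of $P$'' is $P$ itself, and a plain copy entered at the exit end runs the path backwards. You should also verify that the cross-copy hoses are legal, i.e.\ that each end receives at most one hose. This holds because Alice's open ends depend only on $x$ and Bob's only on $y$, and in a copy without a tap those same ends, including the tap end, are free.

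The second problem is the ending. After retracing, the water sits at Alice's end of the tap pipe $i_0(x)$, either in $\mathcal{B}_2$ (if $f=1$) or in $\mathcal{B}_3$ (if $f=0$). You have Alice hose that end into a further pipe for both outputs. The water then crosses to Bob's side and spills there, which is exactly what the target model forbids. Your count is also inconsistent: it charges a pipe for the $f=0$ output but none for the $f=1$ output.

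The correct finish is to leave both ends open. The water then spills on Alice's side at one of two ends whose positions depend only on $x$, so Alice knows which is which. This is the sense in which the paper uses the outputs: in \cref{fig:GHScontrol} the water ``comes back out of the input wire'' of the second copy. This uses only $3m\le 3m+1$ pipes. Under a stricter reading, where the output pipes must be input-independent, a single extra pipe would not suffice along your route. To spill out of Alice's end of a fixed pipe, the water must enter that pipe from Bob's side, which requires an additional round trip through Bob's side for each output.
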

The model where water spills from designated pipes will generalize better to the $k$ player setting, so take the model with two designated pipes on Alice's side as the defining setting. 
We denote the number of pipes needed to compute $f$ in (this variant of) the garden-hose model by $GH(f)$.

In the quantum context, the garden-hose model appears as a description of concatenated teleportations in some settings. 
In particular, consider an unknown quantum state $\ket{\psi}$, which plays the role of the tap in the garden-hose description. 
Alice and Bob share a set of EPR pairs between them, which play the role of the pipes. 
Alice and Bob can then make Bell basis measurements, which act on either two ends of EPR pairs they hold in their own labs, or (in Alice's case) on the input state plus the end of one EPR pair.
To see why the water-flow analogy of the garden-hose model is relevant, consider that after the input state is measured with one EPR pair, the state has moved to the other end of the EPR pair, up to Pauli corrections. 
Each subsequent measurement moves the state to the other end of the measured EPR pair. 
To an observer with access to the measurement outcomes, it is as if the state is flowing along the path determined by the pipes in the garden-hose picture. 

We generalize the garden-hose model to allow $k\geq 2$ players, in which case we name the players Alice, Bob$_1$, Bob$_2$,..., Bob$_{k-1}$.
Each player receives a string of $n$ bits, with $x_0$ the string held by Alice and $x_i$, $k-1\geq i\geq 1$ held by Bob$_i$.
In this setting, pairs of players share pipes, and the actions of the players is as before: Alice may connect the tap to a pipe, and all players may connect open ends of pipes they have access to using hoses. 
Again the result of the computation is determined by considering two labelled pipes shared with Alice. 
This model can be instantiated as before in the quantum context with the tap replaced by an input state and the pipes replaced by EPR pairs. 
We define the minimal number of pipes used in a $k$-party garden-hose protocol by $GH_k(f)$. 
Note that we allow any fixed configuration of pipe connections among the players. 

We make use of the following lemma regarding the $k$-party garden-hose model.
The proof is nearly the same as is given in \cite{klauck2014new, speelman2015instantaneous} for the case of two players.
For completeness, we include the proof in our $k$-player setting.

\begin{lemma}\label{lemma:XORlemma}
The $k$-party garden-hose complexity satisfies the following:
\begin{align}
    GH_k\left(\bigoplus_i f_i\right) \leq 4 \sum_i GH_k(f_i)
\end{align}
\end{lemma}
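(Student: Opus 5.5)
We prove the lemma by the standard two-player XOR construction for the garden-hose model, checking that nothing in it depends on there being only two players. Given $k$-party protocols $P_i$ for each $f_i$, each using the two-designated-output-pipe convention on Alice's side, the plan is first to build a gadget for XOR of two functions. Then we chain the gadgets along the sum. The pipe accounting must come out to at most $4\sum_i GH_k(f_i)$.

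\textbf{Step 1: reversible copies.} For each $f_i$, take the protocol $P_i$ with $m_i = GH_k(f_i)$ pipes. Define a modified protocol $P_i^{(a)}$ in which the water does not start at Alice's tap. Instead it enters through a designated pipe end $a$ on Alice's side and exits through ``$f_i=0$'' or ``$f_i=1$''.

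Garden-hose protocols are reversible: the connections form disjoint paths. So water entering at the output pipe labelled $f_i=c$ and running backwards exits where the tap was. We therefore use a second copy $\overline{P}_i$ of $P_i$, run in reverse. This ``uncomputes'' $f_i$ and returns the water to a chosen location. The conditional routing requires no communication, since each player applies the same fixed, input-dependent hose connections in every copy.

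\textbf{Step 2: XOR gadget.} The construction maintains two ``rails'' on Alice's side, labelled $b=0$ and $b=1$, that carry the running parity. For function $f_i$ we use two copies of $P_i$: copy $P_i^{0}$ is entered if the current parity is $0$, and copy $P_i^{1}$ if it is $1$. Each copy has its tap input replaced by the end of the corresponding rail.

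In $P_i^{b}$, Alice connects the output pipe ``$f_i=c$'' to the new rail $b\oplus c$. Each rail must be reachable from only one path end, otherwise hoses would conflict. To ensure this, we merge the two possible arrivals at rail $b\oplus c$ using the reversed copies. Water arriving at rail $b'$ from $P_i^{b}$ is sent back through a reversed copy of $P_i$ started from the other output pipe. This lands it at a unique fresh pipe end associated with $b'$.

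Concretely, this follows the Klauck–Podder construction: copies $P_i^{0},P_i^{1}$ and reversed copies $\overline{P}_i^{0},\overline{P}_i^{1}$, for $4m_i$ pipes per function. Pipes between any pair of players are simply duplicated in each copy. Every player is therefore still only connecting ends it holds locally.

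\textbf{Step 3: chaining and counting.} Start with the tap attached to rail $0$. Process $f_1,\dots,f_t$ in sequence, and label the final two rails as the designated outputs ``$\oplus_i f_i = 0$'' and ``$\oplus_i f_i = 1$''. Correctness follows by induction on $i$: after stage $i$, the water is on rail $\bigoplus_{j\le i} f_j$. The total pipe count is $\sum_i 4 m_i$. The rails are realized by hose connections on Alice's side and need no extra pipes, since outputs connect directly to the next stage's entry points.

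The main obstacle is Step 2. We must verify that the copy/uncopy arrangement gives each pipe end at most one hose. We must also verify that the water exits on the correct rail with every unused path dead-ending harmlessly. I would check this by tracing the four cases $(b,f_i)$ explicitly, as in \cite{klauck2014new}, and noting that the argument uses only Alice's local connections and the disjoint-path structure, which holds for any $k$.
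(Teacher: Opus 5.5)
Your construction is the same as the paper's. For each $f_i$ the paper takes four copies of $P_i$: two forward copies whose taps are replaced by the ``$0$ in''/``$1$ in'' rails, and two reversed copies whose tap positions become ``$0$ out''/``$1$ out''. It chains these gadgets with hoses on Alice's side and attaches the tap to the first ``$0$ in'', for $4\sum_i GH_k(f_i)$ pipes.

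One point should be stated precisely, because ``started from the other output pipe'' could be read the wrong way. The output ``$f_i=c$'' of $P_i^{b}$ must be hosed to the \emph{same-labelled} output ``$f_i=c$'' of the reversed copy $\overline{P}_i^{\,b\oplus c}$, not the opposite label. Running a copy backwards returns the water to the tap position only when it enters at the output labelled $f_i(x)$, so your Step 1 claim holds only for $c=f_i(x)$, not for arbitrary $c$.
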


\begin{proof}
Consider garden-hose protocols for each $f_i$, which we label $P_i$. 
We give a garden-hose protocol for $\oplus_i f_i$ by wiring copies of the $P_i$ together in an appropriate way. 
Concretely, we take 4 copies of $P_i$, and connect them as shown in \cref{fig:kpartyGH}. 
The gadget has four open hoses, which we wire together with further gadgets: we wire the $0$ output of the $f_i$ gadget to the $0$ input of the $f_{i+1}$ gadget, and the $1$ output of the $f_i$ gadget to the $1$ input of the $f_{i+1}$ protocol. 
By inspection, one can check that the gadget flips the parity of the input if $f_i=1$, and leaves the input unchanged if $f_i=0$.
To compute $\oplus_i f_i$ then, we connect the tap to the $0$ input of the $f_1$ gadget, and label the $0$ and $1$ outputs of the final $f_i$ as the $0$ and $1$ labelled output hoses of the protocol. 
After the water flows through gadgets for each $f_i$, we've computed $\oplus_i f_i$. 
\end{proof}

\begin{figure}
    \centering
    \begin{tikzpicture}

    \draw[thick] (0,0) -- (0,2.5) -- (5,2.5) -- (5,0) -- (0,0);
    \draw[thick,blue] (-1,3) to [out=-90,in=180] (0,2);
    \node[above] at (-1,3) {$0$ in};
    \draw[thick] (0,1) -- (-1,1);
    \draw[thick] (0,0.5) -- (-1,0.5);
    \node at (2.5,1.25) {GH protocol for $f_i$};

    \begin{scope}[shift={(9,0)}]
    \draw[thick] (0,0) -- (0,2.5) -- (5,2.5) -- (5,0) -- (0,0);
    \draw[thick,blue] (-1,3) to [out=-90,in=180] (0,2);
    \node[above] at (-1,3) {$1$ in};
    \draw[thick] (0,1) -- (-1,1);
    \draw[thick] (0,0.5) -- (-1,0.5);
    \node at (2.5,1.25) {GH protocol for $f_i$};
    \end{scope}

    \begin{scope}[shift={(0,-5)}]
    \draw[thick] (0,0) -- (0,2.5) -- (5,2.5) -- (5,0) -- (0,0);
    \draw[thick,blue] (-1,-0.5) to [out=90,in=180] (0,2);
    \node[below] at (-1,-0.5) {$0$ out};
    \draw[thick] (0,1) -- (-1,1);
    \draw[thick] (0,0.5) -- (-1,0.5);
    \node at (2.5,1.25) {GH protocol for $f_i$};
    \end{scope}

    \begin{scope}[shift={(9,-5)}]
    \draw[thick] (0,0) -- (0,2.5) -- (5,2.5) -- (5,0) -- (0,0);
    \draw[thick,blue] (-1,-0.5) to [out=90,in=180] (0,2);
    \node[below] at (-1,-0.5) {$1$ out};
    \draw[thick] (0,1) -- (-1,1);
    \draw[thick] (0,0.5) -- (-1,0.5);
    \node at (2.5,1.25) {GH protocol for $f_i$};
    \end{scope}

    \draw[thick] (-1,1) to [out=180,in=180] (-1,-4);
    \draw[thick] (8,1) to [out=180,in=180] (8,-4);

    \draw[thick] (-1,0.5) to [out=180,in=180] (-1,-1) -- (5,-1) to [out=0,in=180] (7,-4.5) -- (8,-4.5);

    \draw[thick] (8,0.5)  to [out=180,in=0] (7,-2) -- (-1,-2) to [out=180,in=180] (-1,-4.5);
        
    \end{tikzpicture}
    \caption{XOR gadget for computing $\oplus_i f_i$ in the garden-hose model. It can be checked directly that if water enters on the top left, it exits on the left if $f_i=0$ and on the right if $f_i=1$. Meanwhile if water enters from the top right, it exits from the bottom right if $f_i=0$ and from the bottom left if $f_i=1$. By wiring such gadgets together for each $f_i$ then, we can compute $\oplus_i f_i$.}
    \label{fig:kpartyGH}
\end{figure}

\subsection{A garden-hose gadget}\label{sec:GHcorrectionlemma}

Before stating the precise upper bound and its proof, we first need the following lemma, related to the garden-hose model. 
Specifically, we discuss the garden-hose complexity of implementing a controlled $P^\dagger$ gate instantaneously, up to Pauli corrections, and the garden-hose complexity of the resulting corrections.
The lemma is very similar to one in \cite{speelman2015instantaneous}; the only change is to point out that the proof there continues to apply in the context of the $k>2$ party setting. 

\begin{lemma}\label{lemma:GHgrowth}
    Let $f$ be a k-party function known to all parties. Assume Alice holds a single qubit state $S^{f(x)}\ket{\psi}$, where $x=(x_0,x_1,...,x_{k-1})$ Alice knows $x_0$ and Bob$_i$ knows $x_i$. 
    Then the following two statements hold:
    \begin{enumerate}
        \item There exists an instantaneous protocol (no communication) which uses $2GH_k(f)$ EPR pairs after which Alice holds $X^{g(\hat{x})}Z^{h(\hat{x})}\ket{\psi}$, where $\hat{x}$ depends on $x$ and $2GH(f)$ bits that describe Alice and Bob's measurement outcomes. 
        \item The garden hose complexities of $g$ and $h$ are at most linear in the complexity of $f$, 
        \begin{align}
            GH_k(g) &\leq 4GH_k(f) \nonumber \\
            GH_k(h) &\leq 11GH_k(f)
        \end{align}
    \end{enumerate}
\end{lemma}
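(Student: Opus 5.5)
We follow Speelman's garden-hose construction for the two-party case and check that each step uses only Alice's tap, pairwise EPR pairs as pipes, and Bell measurements, so it carries over verbatim to $k$ parties.

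\textbf{Part 1: the instantaneous protocol.} We use the identity $S^\dagger S^{f(x)}\ket{\psi}$ for the branch we want to correct: Alice should apply $S^\dagger$ exactly when $f(x)=1$, without knowing $f(x)$. Fix an optimal $k$-party garden-hose protocol for $f$ with $GH_k(f)$ pipes, and replace each pipe by an EPR pair shared between the two endpoint players. Alice teleports the qubit into the ``tap'' pipe, and every player performs Bell measurements on the pairs of pipe ends that their input dictates should be joined by hoses. The qubit then travels along the water path and ends on Alice's side, in one of the two designated pipes $f=0$ or $f=1$, carrying a Pauli error $X^{a}Z^{b}$ determined by the outcomes along the path.

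Alice applies $S^\dagger$ to the qubit at the $f=1$ pipe. She cannot know which pipe holds it, but applying $S^\dagger$ to an unoccupied EPR half is harmless. Since $S^\dagger X = (\text{phase})\, X Z\, S^\dagger$, the Pauli error is conjugated into another Pauli. Alice must end up holding the qubit at a fixed location. To arrange this, she joins the two designated pipes via a second copy of the network run in reverse, the standard ``merge'' trick of Speelman. This is where the count becomes $2GH_k(f)$ EPR pairs. After this, Alice holds $X^{g(\hat x)}Z^{h(\hat x)}\ket{\psi}$, where $\hat x$ is $x$ augmented by the $2GH_k(f)$ outcome bits, each held by the player who measured it.

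\textbf{Part 2: complexity of $g$ and $h$.} First, $g$ is the XOR of the $X$-components of the outcomes along the path actually taken. Each outcome bit $m_e$ at a hose $e$ contributes only if the water passes through $e$, so $g=\bigoplus_e [\text{path uses } e]\wedge m_e$. Rather than decomposing $g$ this way, we build a garden-hose protocol for $g$ directly. We duplicate the network into a ``parity 0'' layer and a ``parity 1'' layer, and each player crosses between layers at a hose whenever its local outcome bit is 1. This doubling, plus the merge copy, gives $GH_k(g)\le 4GH_k(f)$.

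For $h$ the same construction applies, but with two complications. The $Z$-components pick up extra contributions from commuting $S^\dagger$ past $X^{a}$, which adds a term $a\cdot f(x)$, that is, an AND of $f$ with the $X$-parity accumulated before the $S^\dagger$. The phase terms also enter. We would handle these by applying \cref{lemma:XORlemma} to combine the pieces: the path-parity of the $Z$ bits, the $X$-parity before correction, and the AND with $f$, realized by routing the water through $f$'s network conditioned on the parity layer. Tallying the constants gives the stated $11GH_k(f)$.

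The main obstacle is verifying the constant $11$ for $h$, specifically the extra term coming from commuting $S^\dagger$ past $X$. We would follow Speelman's accounting line by line, noting that no step relies on there being only two parties: all operations are local Bell measurements on pairwise pipes and a conditional layer switch.
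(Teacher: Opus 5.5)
Your Part 1 and your bound on $g$ match the paper's argument: a conditional $S^\dagger$ on the $f=1$ spout, a second copy of the network to merge the two spouts, and the doubled ``rail'' network that crosses at hoses with $b_x=1$, costing $2\cdot 2GH_k(f)$ pipes. You also correctly identify $h$ as the $Z$-parity along the path plus $f(x)$ times the $X$-parity accumulated before the $S^\dagger$. The gap is the bound $GH_k(h)\le 11GH_k(f)$. You defer it (``tallying the constants gives the stated $11$''), and the route you sketch cannot produce it. \cref{lemma:XORlemma} charges a factor $4$ on every summand. The $Z$-parity summand alone is only bounded by $4GH_k(f)$ via the rail, so XOR-combining the pieces gives a bound no better than $16GH_k(f)$, before the AND term is even paid for. (The ``phase terms'' you mention are global phases and need no handling.)

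The missing idea is to avoid XOR-combining and instead branch on $f$ first, so that costs add rather than multiply. Run a two-spout garden-hose protocol for $f$, which the paper charges at $3GH_k(f)$. Feed the $f=0$ spout into a rail over the $2GH_k(f)$-pipe correction network that crosses at a hose iff $b_z=1$. Feed the $f=1$ spout into a second such rail that crosses iff $b_x\oplus b_z=1$ for hoses in the first copy (before the $S^\dagger$), and iff $b_z=1$ for hoses in the second copy. Each rail costs $4GH_k(f)$, for a total of $3+4+4=11$. This works for $k$ parties because each crossing decision is made by the player who performed that Bell measurement. That player knows both outcome bits and, from the fixed wiring, which copy the hose belongs to. Your XOR-lemma version would still give a linear bound with a worse constant. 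That is enough for \cref{thm:Tdepth}, where the constant is absorbed into $K$, but it does not establish the lemma as stated.
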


\begin{proof}
For the first part, run the garden hose protocol with $S^{f(x)}\ket{\psi}$ as input, have Alice do $S^\dagger$ to the $f=1$ output hose, then wire the outputs to a copy of the protocol. 
The input qubit comes back out of the ``input'' wire of the second protocol, now with no $S$ gate. 
This is illustrated in \cref{fig:GHScontrol}.
Note that we incur Pauli corrections accumulated from all the Bell basis measurements along the path of the qubit. 

Note that in the garden-hose protocol used to apply the conditional $S^\dagger$ there is a sequence of teleportation measurements made, which create possible $X$ and $Z$ corrections. 
Call the bits determining if there is an $X$ correction $b_x^{i,j}$, where $i,j$ label the two EPR pairs involved in the measurement.
Similarly, there are corrections $b_z^{i,j}$. 
Note that not all measurements contribute to these corrections, only those that occur in the unbroken chain of EPR pairs connected to the input state.
Rather than obtain the input state with a $(S^\dagger)^{f(x,y)}$ applied, we actually end up applying, up to a global phase, the operator
\begin{align}
    X^{\sum_{i\in A} b_x^{i,j}}Z^{\sum_{i\in A} b_z^{i,j}} (S^\dagger)^{f(x,y)} X^{\sum_{i\in B} b_x^{i,j}} Z^{\sum_{i\in B} b_z^{i,j}}
\end{align}
where the pairs of indices $(i,j)\in B$ correspond to measurements in the chain that occur before $(S^{-1})^{f(x,y)}$, while pairs $(i,j)\in A$ occur after. 
Using that (again up to a global phase)
\begin{align}
    XZS^{\dagger} &= S^{\dagger}X \nonumber \\
    S^{\dagger}Z &= ZS^{\dagger}
\end{align}
the above becomes
\begin{align}
    X^{g(\hat{x})} Z^{h(\hat{x})} (S^{\dagger})^{f(x,y)} 
\end{align}
where
\begin{align}
    g(\hat{x}) &= \sum_{(i,j)\in A\cup B} b_x^{i,j}, \nonumber \\
    h(\hat{x}) &= \sum_{(i,j)\in A\cup B} b_z^{i,j} + f(x,y) \sum_{(i,j)\in B} b_x^{i,j}.
\end{align}
Thus to compute $g$, we just need to compute the parity of all of the $b_x^{i,j}$ that occur in the chain. 
Note that which measurements are actually a part of the chain depends on $x$, so this is a function of the original input $x$ as well as the measurement outcomes $b_x^{(i,j)}$.
The function $h$ is somewhat more involved, in particular there is an additional correction based on the $b_{x}^{(i,j)}$ for measurements that occur before the conditional $S^{\dagger}$. 

Let's begin with designing a garden-hose protocol to compute $g(\hat{x})$. 
To do this, we create a ``rail'', consisting of two EPR pairs, one for each EPR pair in the initial protocol. 
Then, we connect subsequent rails in the ordering defined by the sequence of EPR pairs used in the original protocol. 
We connect the rails end to end if $b_x^{i,j}=0$, and we connect them cross-wise if $b_x^{i,j}=1$. 
Thus after running over all $(i,j)$, the input is crossed if the parity of the $b_x^{i,j}$ is odd, and left unchanged if the parity is even. 
This protocol uses twice the EPR pairs used in the protocol for applying $(S^{-1})^{f(x,y)}$, which itself was $2GH(f)$, so the cost is $4GH(f)$. 

Now we consider the function $h(\hat{x})$. 
We need a somewhat more involved protocol that treats EPR pairs before and after the conditional $S^{\dagger}$ differently, and accounts for the value of $f(x,y)$. 
To do this, we first run a garden-hose protocol to compute $f(x,y)$, then feed the two output hoses into two different subsequent garden-hose protocols.
The $f=0$ pipe is input to a protocol computing the parity of just the $Z$ corrections. 
We do this using the ``rail'' construction, just as in computing $g(\hat{x})$. 
The $f=1$ pipe is input to a similar rail protocol, which flips the rails if $b_{x}^{(i,j)}\oplus b_z^{(i,j)}=1$ for pipes occurring before the $S^{\dagger}$, and flips the pipes after the $S^{\dagger}$ if $b_z^{(i,j)}=1$.  
The garden-hose complexity of this protocol is composed of:
\begin{itemize}
    \item The complexity of computing $f(x,y)$, in a way that uses just two spilling pipes, which is $3GH_k(f)$.
    \item The complexity of computing the $Z$ corrections only, in the sub-protocol that is used when $f(x,y)=0$. This is $4GH_k(f)$, where the 4 comes from using the rail construction to double the pipes in the initial protocol, which itself was the protocol that involved computing $f$, applying $S^{\dagger}$, then running the protocol for $f$ in reverse.
    \item The complexity of computing the parity of the $b_{x}^{(i,j)}\oplus b_z^{(i,j)}$ for the first part of the protocol (before $S^{\dagger}$) along with the parity of the $b_z^{(i,j)}$ in the later part of the protocol. This is $4GH_k(f)$ again.  
\end{itemize}
In total then the garden hose complexity of $h(\hat{x})$ is $11GH_k(f)$. 
\end{proof}

\begin{figure}
    \centering
    \begin{subfigure}{0.45\textwidth}
    \begin{tikzpicture}

    \draw[thick] (0,0) -- (0,2.5) -- (5,2.5) -- (5,0) -- (0,0);
    \draw[thick,mid arrow] (-1,2) -- (0,2);

    \draw[thick,mid arrow] (0,1) -- (-1,1);
    \node[left] at (-1,1) {$f=0$};
    \draw[thick,mid arrow] (0,0.5) -- (-1,0.5);
    \node[left] at (-1,0.5) {$f=1$};

    \node at (2.5,1.25) {GH protocol for $f$};
        
    \end{tikzpicture}
    \caption{}
    \end{subfigure}
    \begin{subfigure}{0.47\textwidth}
    \begin{tikzpicture}

    \draw[thick] (0,0) -- (0,2.5) -- (5,2.5) -- (5,0) -- (0,0);
    \draw[thick,mid arrow] (-1,2) -- (0,2);

    \draw[thick] (0,1) -- (-1.5,1) -- (-1.5,1-3) -- (-1,1-3);
    \draw[thick] (0,0.5) -- (-2,0.5) -- (-2,0.5-3) -- (-1,0.5-3);

    \draw[thick,fill=white] (-0.3,0.25) -- (-0.3,0.75) -- (-1,0.75) -- (-1,0.25) -- (-0.3,0.25);
    \node at (-0.65,0.5) {\small{$S^\dagger$}};

    \node at (2.5,1.25) {GH protocol for $f$};

    \draw[thick] (0,0-3) -- (0,2.5-3) -- (5,2.5-3) -- (5,0-3) -- (0,0-3);
    \draw[thick,mid arrow] (0,2-3) --  (-1,2-3);

    \draw[thick] (0,1-3) -- (-1,1-3);
    \draw[thick] (0,0.5-3) -- (-1,0.5-3);

    \node at (2.5,1.25-3) {GH protocol for $f$};
                
    \end{tikzpicture}
    \caption{}
    \end{subfigure}
    \caption{a) $k$-party garden-hose protocol for a function $f$. The water enters in the top left pipe, is redirected through a sequence of pipes represented abstractly as the white rectangle, then exits through one of two pipes, indicating the value of $f(x,y)$. The open input and both open outputs are held by Alice, while the remaining parties are involved in the intermediate steps. b) Protocol for applying a $(S^\dagger)^{f(x,y)}$. The input state is entered into a garden hose protocol that computes $f$, then $S^\dagger$ is applied only to the output pipe indicating $f(x,y)=1$. Another instance of the protocol for $f$ is then run, and the corresponding output pipes of the two protocols are connected. As a result water always runs out a single fixed pipe.}
    \label{fig:GHScontrol}
\end{figure}

\subsection{Instantaneous \texorpdfstring{$k$}{k}-party computations from \texorpdfstring{$T$}{T}-depth}\label{sec:T-depth}

We are ready to prove our upper bound on the entanglement cost of implementing a unitary based on the $T$-depth. 
Before delving into the detailed proof, we give a heuristic understanding of where the dominant scaling of the entanglement cost comes from. 
The protocol involves first having all of the Bobs perform Bell basis measurements using EPR pairs shared with Alice, as if teleporting their states to her. 
This gives the full state in Alice's lab, up to Pauli corrections which are determined by a distributed set of measurement outcomes. 
Alice then applies her first Clifford layer, then applies the first layer of $T$ gates. 
We can conjugate the Pauli's through the Clifford to give new Pauli's, then through the $T$ gates using the identities
\begin{align}
    ZT &=TZ, \nonumber \\
    TX &=SXT,
\end{align}
which hold up to a global phase. 
This leaves us in a state where one layer of the unitary has been applied, up to $S$ corrections. 
We then use \cref{lemma:GHgrowth} to correct the $S$ gates. 
There is an entanglement cost to doing this set by the garden-hose complexity of the function determining if there is an $S$ gate or not.
After correcting the $S$ gates as needed, we begin again in the situation we started in, trying to apply a Clifford+$T$ layer to a state with unknown Pauli corrections applied to it. 
We repeat the above procedure to apply the next layer. 
After each layer, the garden-hose complexity of the needed $S$ corrections grows, and this growth controls the entanglement cost. 

To capture the form of the $S$ corrections more precisely and understand how their garden-hose complexity grows, define $g_{i,j}=1$ if there is an $X$ correction on the $j$th input wire and $0$ otherwise, along with $g_{i+1,j}=1$ if an $X$ correction appears on the $j$th output wire after conjugation. 
Similarly, we define $h_{i,j}$ and $h_{i+1,j}$ to be 1 to indicate a $Z$ correction on the input or output $j$th wire, respectively. 
Then, we can see that the output wire functions are related to the input wire functions by
\begin{align}
    g_{i+1,k}&=\bigoplus_{j\in S_{g,k}} g_{i,j} \oplus \bigoplus_{j\in S_{g,k}'}h_{i,j},\nonumber \\
    h_{i+1,k}&=\bigoplus_{j\in S_{h,k}} g_{i,j} \oplus \bigoplus_{j\in S_{h,k}'}h_{i,j}.
\end{align}
The subsets $S_{g,k}^{(\prime)}$ and $S_{h,k}^{(\prime)}$ depend on the choice of Clifford and the wire $k$ being considered.
For Clifford with backward light cones of size at most $\ell$, we have that these sets are not larger than $\ell$. 
From \cref{lemma:XORlemma}, we know how the garden-hose complexity of the XOR of many functions behaves, and in particular we can bound it by something of order $\ell$ times the worst-case garden-hose complexity of the $g_{i,j}$ and $h_{i,j}$. 
The garden-hose complexity of the worst single qubit correction at layer $i+1$, call it $t_{i+1}$, then is related to the complexity at the previous layer by $t_{i+1}\lesssim \ell \,t_i$.
An added complication is that these Pauli corrections move through the $T$ gates at this layer to give $S$ gates, and then to correct the $S$ gates we apply \cref{lemma:GHgrowth}. 
This increases the garden-hose complexity of the Pauli corrections on that wire, but only by a constant factor that contributes to the value of $K$.
It is the XOR functions determined by the choice of Clifford that give the dominant scaling of the entanglement cost. 

Finally, to solve the recursive relation $t_{i+1}\lesssim t_i \ell$ we need to know the garden-hose complexity at the $0$th layer, which means the garden-hose complexity of the Pauli corrections right after the initial teleportation into Alice's lab. 
These have constant garden-hose complexity, since they depend on just one bit held in one of the Bobs' labs. 
After the first Clifford, denoted $C_0^\ell$ then, the garden-hose complexity is at most $\ell$, since the functions $g_{1,i}$, $h_{1,i}$ are XOR's of $\ell$ bits. 
However, it is possible some of those bits are held together in one lab, in which case the parity functions $g_{1,i}$, $h_{1,i}$ only depend on a parity of a subset of them, which the local party can pre-compute. 
For instance if we have $k$-parties, $g_{1,i}$, $h_{1,i}$ are XOR functions of at most $k$ distributed bits. 
This means $t_1=O( \min\{k,\ell\})$, and hence $t_{d}\lesssim \min\{k,\ell\} (K\ell)^{d-1}$. 
Letting $\ntot$ be the total number of qubits the unitary acts on, the entanglement cost for each layer is then bounded by $\ntot$, the maximal number of possible $S$ corrections at each layer, times $t_i$, the worst-case correction complexity. 
The total entanglement cost is then the sum of the cost for each layer, 
\begin{align}
    E\leq \sum_{i=1}^d t_{i} \ntot \leq O((K\ell)^{d-1}\cdot \ntot \cdot \min\{k,\ell\})
\end{align}
which is the claimed scaling. 
We give a formal proof and more detailed accounting in the theorem proof below. 

\begin{theorem}\label{thm:Tdepth}
    Given a unitary $U_{AB_1...B_{k-1}}$ that can be implemented in a Clifford+$T$ decomposition using a circuit of $T$-depth $d$, we have that $U$ can be computed instantaneously using
    \begin{align}
        E(U)\leq O((K\ell)^{d-1}\cdot \ntot \cdot \min\{k,\ell\})
    \end{align}
    EPR pairs.  
\end{theorem}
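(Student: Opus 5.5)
We build the protocol layer by layer, following the heuristic sketch above, and track by induction the garden-hose complexity of every Pauli correction. First every Bob$_i$ teleports his qubits to Alice through shared EPR pairs, using one EPR pair per qubit. Alice then holds $P\ket{\psi}$, where $P$ is a Pauli string and each qubit's $X$- and $Z$-correction bits are held in a single lab; in the $k$-party garden-hose model these cost $O(1)$ pipes. We maintain the following invariant. Before Clifford layer $C_i$ is applied, Alice holds $P_i\,C_{i-1}\bar T\cdots C_0\ket{\psi}$, where each correction bit $g_{i,j},h_{i,j}$ is a function of the inputs and the measurement outcomes so far, with $GH_k(g_{i,j}),GH_k(h_{i,j})\le t_i$.

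For the inductive step, Alice applies $C_i$ and then $\bar T$. Conjugating $P_i$ through $C_i$ gives new bits $g_{i+1,j}',h_{i+1,j}'$. By the light-cone assumption, each is an XOR of at most $2\ell$ of the old bits, so \cref{lemma:XORlemma} gives complexity at most $8\ell t_i$. Commuting through $\bar T$ with $ZT=TZ$ and $TX=SXT$ leaves an $S^{g'}$ error on each $T$-wire, and $g'$ has complexity at most $8\ell t_i$. Alice removes each such error with \cref{lemma:GHgrowth}, at a cost of $2\cdot 8\ell t_i$ EPR pairs per wire. By \cref{lemma:GHgrowth} the resulting new Pauli bits have complexity $O(\ell t_i)$, and folding the old $X$ bit into the new one by another XOR only changes the constant. (Here $S^\dagger$ versus $S$ is harmless, since $S^{-1}=S^3$; alternatively the lemma is applied with $f$ replaced by the appropriate correction function.) This yields $t_{i+1}\le K\ell\, t_i$ for an absolute constant $K$.

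The base case needs care to get the factor $\min\{k,\ell\}$ rather than $\ell$. After $C_0$, each correction bit is an XOR of at most $2\ell$ teleportation bits spread over at most $k$ labs. Each lab XORs its own bits locally, so the bit is an XOR of at most $\min\{k,2\ell\}$ single-lab bits. Hence $t_1=O(\min\{k,\ell\})$, and by induction $t_i\le O(\min\{k,\ell\})(K\ell)^{i-1}$.

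Summing costs, layer $i$ uses at most $\ntot$ applications of \cref{lemma:GHgrowth}, each costing $O(t_i)$. The total is therefore $\ntot+\sum_{i=1}^{d}O(\ntot t_i)=O((K\ell)^{d-1}\ntot\min\{k,\ell\})$, because the geometric sum is dominated by its last term once $K\ell\ge 2$. The final layer $C_d$ introduces no new $S$ corrections, only a Pauli, which fits the definition of instantaneous computation since the $P$ there is a function of the outcomes.

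The main obstacle is the inductive step. We must check that the $k$-party versions of \cref{lemma:GHgrowth} and \cref{lemma:XORlemma} compose: each correction function is fed in as a garden-hose sub-protocol whose auxiliary bits (earlier measurement outcomes) are held by parties that can locally wire pipes depending on them. We must also make sure that in the $S$-correction step the first garden-hose protocol's outputs land with Alice, as our designated-pipe convention guarantees. If this bookkeeping goes through, the rest is the recursion above.
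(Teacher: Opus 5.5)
Your proposal is correct and takes essentially the same route as the paper's proof. In both, the Bobs teleport their qubits to Alice, the Pauli frame is pushed through each light-cone-$\ell$ Clifford (giving XORs of $O(\ell)$ earlier correction bits, bounded via \cref{lemma:XORlemma}) and through $\bar T$, the resulting $S$ errors are cancelled with \cref{lemma:GHgrowth}, and the recursion $t_{i+1}\le K\ell\, t_i$ is solved from the base case $t_1=O(\min\{k,\ell\})$ (each lab pre-XORs its own teleportation bits) before summing $O(\ntot\, t_i)$ over the $d$ layers; the only wrinkle is that the $S$-correction cost producing $t_i$ is really $O(\ell\, t_{i-1})$ rather than $O(t_i)$, an index shift absorbed by the bound on $t_i$.
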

\begin{proof} We first have each of the Bobs 'teleport'\footnote{By this we mean have each Bob measure their system in the Bell basis along with one end of a maximally entangled state, with the other end held by Alice.} their systems $B$ to Alice, who then holds
    \begin{align}
        X^{\vec{g}_{0}(y)}Z^{\vec{h}_0(y)}\ket{\psi}_{AB}
    \end{align}
    where, less succinctly, we mean 
    \begin{align}
        X^{\vec{g}_0(y)}&=X_1^{g_{0,1}(y)}...X_n^{g_{0,n}(y)}\nonumber \\
        Z^{\vec{h}_0(y)}&=Z_1^{h_{0,1}(y)}...Z_n^{h_{0,n}(y)}
    \end{align}
    where $X_i$ and $Z_i$ act on the $i$th qubit. 
    Note that the entries of both $\vec{h}$ and $\vec{g}$ all have constant garden-hose complexity, since they are functions of single bits held in one of the Bobs labs. 
    
    This will serve as our base case in an inductive argument. 
    We induct on the level $i$, and assume Alice holds the state
    \begin{align}           X^{\vec{g}_i(x,y)}Z^{\vec{h}_i(x,y)}\bar{T}_iC_i...\bar{T}_1C_1\ket{\psi}_{AB}.
    \end{align}
    where Alice holds $x$ and Bob holds $y$, and the entries of $\vec{g}_i$ and $\vec{h}_i$ have known garden-hose complexities. 
    Define
    \begin{align}
        t_i=\max\{\max_j\{GH(g_{i,j})\},\max_j\{ GH(h_{i,j})\}\}.
    \end{align}
    In words $t_i$ is the worst-case garden-hose complexity of any single $X$ or $Z$ correction in the $i$th layer. 
    We have from above that $t_0=2$. 
    
    To induct have Alice apply $\bar{T}_{i+1}C_{i+1}$, obtaining 
    \begin{align}        \bar{T}_{i+1}C_{i+1}X^{\vec{g}_i(x,y)}Z^{\vec{h}_i(x,y)}\bar{T}_iC_i...\bar{T}_1C_1\ket{\psi}_{AB} = \bar{S}^{\vec{f}_i(x,y)}X^{\vec{g}_i'(x,y)}Z^{\vec{h}_i'(x,y)}\bar{T}_{i+1}C_{i+1}...\bar{T}_1C_1\ket{\psi}_{AB} \nonumber
    \end{align}
    Then, we use the procedure of \cref{lemma:GHgrowth} to undo the phase gates, obtaining
    \begin{align}            X^{\vec{g}_{i}''(x,y)\oplus\vec{g}_i'(x,y)}Z^{\vec{h}_i''(x,y)\oplus\vec{h}_i'(x,y)}\bar{T}_{i+1}C_{i+1}...\bar{T}_1C_1\ket{\psi}_{AB}. 
    \end{align}
    The functions $\vec{g}'_{i}, \vec{h}'_{i}$ arise from commuting the $X, Z$ operators through $\bar{T}_{i+1}C_{i+1}$, while the $\vec{g}''_{i}, \vec{h}''_{i}$ operators appear when correcting the phase gates.
    The singly-primed operators are of the form
    \begin{align}\label{eq:parityform}
        g'_{i,j}&=\bigoplus_{l\in S_{g,j}}{g}_{i,l}\oplus\bigoplus_{k\in S'_{g,j}}{h}_{i,k} \nonumber \\
        h'_{i,j}&=\bigoplus_{l\in S_{h,j}}{g}_{i,l}\oplus\bigoplus_{k\in S'_{h,j}}{h}_{i,k}
    \end{align}
    where the subsets $S_{g/h,j}^{(\prime)}$ depend on the Clifford, and have size at most $\ell$. 
    
    The functions $\vec{g}''_{i}, \vec{h}''_{i}$ appear when undoing the $S^{\vec{f}(x,y)}$ operator, which we do using the procedure in \cref{lemma:GHgrowth}. 
    We are also provided with upper bounds on the garden-hose complexity of these functions from that lemma. 
    We want to determine the garden-hose complexity of $g_{i+1,j}=g_{i,j}''\oplus g_{i,j}'$ and $h_{i+1,j}=h_{i,j}''\oplus h_{i,j}'$. 
    Starting with $g_{i+1,j}$, we have
    \begin{align}
        GH(g_{i+1,j})&=GH\left(\bigoplus_{l\in S_{g,j}}{g}_{i,l}\oplus\bigoplus_{k\in S'_{g,j}}{h}_{i,k}\oplus g_{i,j}'' \right)\nonumber \\
        &\leq 4\left(\sum_{l}GH\left(g_{i,l} \right)+\sum_k GH(h_{i,k}) + GH(g''_{i,j})\right) \nonumber \\
        &\leq  4\left(\sum_{l}GH\left(g_{i,l} \right)+\sum_k GH(h_{i,k}) + 4GH(f_{i,j})\right) \nonumber \\
        &\leq  4\left(\ell t_i+ \ell t_i + 4GH(f_{i,j})\right)
    \end{align}
    where the first inequality uses \cref{lemma:XORlemma} (the XOR lemma), the second line uses that $GH(g''_{i,j})\leq 4GH(f_{i,j})$ which comes from \cref{lemma:GHgrowth}, and the last uses the definition of $t_i$. 
    
    It remains to bound the garden-hose complexity of $f_{i,j}$. 
    Notice that since $f_{i,j}$ is itself a parity function of the $g_{i,j}$ and $h_{i,j}$ (it is of the form \eqref{eq:parityform}), so again its garden hose complexity is at most $4\ell t_i$ by a use of the XOR lemma.
    Overall then this gives
    \begin{align}\label{eq:gupper}
        GH(g_{i+1,j})&\leq 4(\ell t_i+\ell t_i + 4\cdot 4\ell t_i) = 72\ell t_i
    \end{align}
    An upper bound can be determined for $GH(h_{i+1,j})$ in a similar way. 
    The only difference is that where before we used $GH(g''_{i,j})\leq 4GH(f_{i,j})$, we now need $GH(h''_{i,j})\leq 11GH(f_{i,j})$, which is given in \cref{lemma:GHgrowth}.
    This changes the constant but gives a similar upper bound, 
    \begin{align}\label{eq:hupper}
        GH(h_{i+1,j})&\leq 184 \ell t_i.
    \end{align}
    Using equations \eqref{eq:gupper} and \eqref{eq:hupper}, we get that
    \begin{align}
        t_{i+1}\leq 184 \ell t_i.
    \end{align}
    Our numerical constant is not optimal, but we lose the optimal constants in favour of a simpler presentation. 
    We will write $t_{i+1}=K\ell t_i$ to summarize the above. 
    This relation and our earlier computation showing that $t_1=\min\{\ell,k\}$ is solved by
    \begin{align}
        t_d\leq O((K\ell )^{d-1} \min\{\ell,k\}).
    \end{align}
    Summing over the corrections at each of the $d$ layers, which each involve $\ntot$ qubits, leads to the claimed upper bound. 
\end{proof}

\section{Newman's Theorem for PSM}

The following theorem shows that the randomness complexity in the PSM model is never much larger than the communication complexity. 
The proof is straightforward and similar to the well-known Newman's theorem in communication complexity \cite{newman1991private}; however we have not seen it recorded explicitly for PSM, so we give a proof here for completeness. 
A similar statement for the related conditional disclosure of secrets setting was proven in \cite{applebaum2021placing}. 

\begin{theorem} \textbf{(Newman's theorem for PSM)} Consider an $\epsilon$-correct and $\delta$-secure PSM protocol for function $f$, with communication complexity $c$. 
Then, there exists an $\epsilon+\delta'$-correct and $\delta+2\delta'$-secure PSM protocol for $f$ using communication complexity $c$ and randomness complexity $2c+O(\log\left(c+n\right) + \log(1/\delta'))$.
\end{theorem}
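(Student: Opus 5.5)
The plan is the standard Newman-style argument: sample a small multiset of random strings and show that, with positive probability, it preserves both correctness and security.

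\emph{Setup.} Let $R$ be the original shared random string, and let $m_A(x,r)$, $m_B(y,r)$ be the messages. The referee's view is the pair $(m_A,m_B)\in\{0,1\}^c$, where $c$ counts the total bits sent. Pick $t$ strings $r_1,\dots,r_t$ independently from the original distribution. The new protocol uses $\lceil\log t\rceil$ shared random bits to choose an index $j\in[t]$ uniformly, and then runs the old protocol with randomness $r_j$. Communication is unchanged. The referee keeps the same decoder, and the simulator keeps the same simulator $\mathcal S$. We will choose $t=2^{2c}\cdot\mathrm{poly}(c,n,1/\delta')$, which gives randomness $2c+O(\log(c+n)+\log(1/\delta'))$.

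\emph{Concentration for a single input.} Fix an input $(x,y)$. Let $P_{x,y}$ be the true distribution on $\{0,1\}^c$ of the message pair, and let $\hat P_{x,y}$ be the empirical distribution over the $t$ sampled strings. We need $\|\hat P_{x,y}-P_{x,y}\|_1\le\delta'$.

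\begin{itemize}
\item For each fixed event $E\subseteq\{0,1\}^c$, Hoeffding gives $|\hat P(E)-P(E)|\le\delta'/2$ except with probability $2e^{-t\delta'^2/2}$.
\item Union-bounding over all $2^{2^c}$ events gives total variation at most $\delta'/2$ with failure probability $2^{2^c}\cdot 2e^{-t\delta'^2/2}$.
\item This requires only $t=O((2^c+n)/\delta'^2)$, hence $\log t=c+O(\log n+\log(1/\delta'))$. That is better than the stated $2c$, so the stated bound leaves slack.
\item If one prefers to avoid the union over events, apply Hoeffding to each of the $2^c$ outcome probabilities with additive accuracy $\delta'/2^{c}$. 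This needs $t\approx 2^{2c}\,(n+c)/\delta'^2$, which is where the $2c$ comes from.
\end{itemize}

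Either route then union-bounds over the $2^{2n}$ inputs. The exponent $2n$ costs only an additive $O(\log n)$ inside $\log t$, so some choice of $r_1,\dots,r_t$ is good for every input simultaneously.

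\emph{Consequences.} Fix such a good tuple.
\begin{itemize}
\item \textbf{Correctness.} The decoder's output on input $(x,y)$ is a function of the message pair. Its error probability under $\hat P_{x,y}$ therefore differs from that under $P_{x,y}$ by at most the total variation distance, giving $\epsilon+\delta'$ correctness. Any private randomness the referee uses can be absorbed into the decoder.
\item \textbf{Security.} By the triangle inequality, $\|\hat P_{x,y}-\mathcal S(f(x,y))\|_1\le\|\hat P_{x,y}-P_{x,y}\|_1+\delta\le\delta+2\delta'$. The factor 2 absorbs the conversion between total variation distance and $\ell_1$ distance, matching the claimed $\delta+2\delta'$.
\end{itemize}

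\emph{Main obstacle.} The delicate step is bookkeeping the normalization conventions: $\ell_1$ versus total variation, and the correctness metric as defined in \cref{def:PSQM}. A secondary point is that the players may use private randomness beyond the shared string. This can be handled by folding each player's private coins into the shared string before the sampling step, since the messages are still deterministic functions of the combined shared randomness.
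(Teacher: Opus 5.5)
Your proposal is correct and follows essentially the same route as the paper: fix $K=\Theta(2^{2c}(c+n)/\delta'^2)$ samples from the original randomness distribution, apply Hoeffding to each transcript probability with accuracy $\delta'/2^c$, union-bound over transcripts and inputs, and then get security from the triangle inequality and correctness from the fact that the referee's output depends only on the transcript. Your alternative of union-bounding over all $2^{2^c}$ events is also valid and in fact improves the randomness to $c+O(\log n+\log(1/\delta'))$. Your bookkeeping is also somewhat more careful than the paper's, for instance in counting $2^{2n}$ input pairs and in tracking $\ell_1$ versus total variation distance.
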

\begin{proof}\, Let $D$ be the distribution from which the players sample their randomness (independently of their inputs). 
Sample $r_1,\ldots,r_K\sim D$ where we take $K=2^c\cdot (c+n)\cdot \log(1/\delta')/\epsilon^2$. 
We will think of $r_1,\ldots,r_K$ being the common source of randomness for all inputs and in the new protocol, the players will instead sample $k\sim [K]$ and run the original protocol with $r_k$ as the randomness. 
We will now analyze the correctness and privacy of this protocol. 

\paragraph*{Privacy.} Fix any inputs $x,y$. 
Now, each fixed random string $r_i$ induces a fixed transcript $m_{x,y}(r_i)$ and we consider the distribution $M'_{x,y}$ on transcripts obtained by sampling $i\sim[K]$ and outputting $m_{x,y}(r_i)$. 
Compare it to the ideal distribution $M_{x,y}$ on transcripts obtained by sampling a random $r\sim D$ and outputting $m_{x,y}(r)$. 
For any $\tau$ in the support of the message space, we want that with high probability, the probabilities assigned to $\tau$ by $M_{x,y}$ and by $M’_{x,y}$ differ very little — by at most $t=\delta'/2^c$. 
By Hoeffding’s inequality, these probabilities differ by at least $t$ with probability at most $\exp(-t^2K)$.
This means, by a union bound over all possible transcripts $\tau$, that the total variation distance between $M_{x,y}$ and $M_{x,y}'$, given by
\begin{align}
    \frac{1}{2}\sum_\tau |M_{x,y}[\tau]-M_{x,y}'[\tau]| 
\end{align}
is at most $t\cdot 2^{c}/2$ with probability at least $1- \exp(-t^2K)\cdot 2^c$. 
Further, we want that these distributions are $\delta'$-close for all input pairs $x,y$, which (by the union bound again) occurs with probability at least $1-\exp(-t^2K)\cdot 2^c\cdot 2^n$. 
If this happens, then it would follow from the Triangle Inequality that the new protocol is $(\delta+2\delta')$-private. 
Setting $t=\delta'/2^c$ so that the total variation distance is at most $\delta'$, we obtain that the bad event has probability at most 
$\exp(-\delta'^2K/2^{2c})\cdot 2^c\cdot 2^n$. 
We want this to be strictly smaller than 1, so that by linearity of expectation, we can fix some random strings $r_1,\ldots,r_K$ such that the distributions $M'_{x,y}$ and $M_{x,y}$ are $\delta'$-close in total variation distance for all $x,y$. 
To achieve this bound, we only need to set $K=\Theta(2^{2c}\cdot (c+n)/\delta'^2)$. 

\paragraph*{Correctness.} The correctness of this protocol follows immediately from the guarantee from earlier that the distribution $M'_{x,y}$ is $\delta'$-close to $M_{x,y}$ for all $x,y$ -- indeed, the referee's output only depends on the distribution of the transcript, and since the original protocol is $\epsilon$-correct, the new protocol is $\epsilon+\delta'$-correct.
\end{proof}

\bibliographystyle{unsrtnat}
\bibliography{biblio}

@inproceedings{barrington1986bounded,
  title={Bounded-width polynomial-size branching programs recognize exactly those languages in {NC}},
  author={Barrington, David A},
  booktitle={Proceedings of the eighteenth annual ACM symposium on Theory of computing},
  pages={1--5},
  year={1986}
}

@inproceedings{feige1994minimal,
  title={A minimal model for secure computation},
  author={Feige, Uri and Killian, Joe and Naor, Moni},
  booktitle={Proceedings of the twenty-sixth annual ACM symposium on Theory of computing},
  pages={554--563},
  year={1994}
}

@inproceedings{ishai1997private,
  title={Private simultaneous messages protocols with applications},
  author={Ishai, Yuval and Kushilevitz, Eyal},
  booktitle={Proceedings of the Fifth Israeli Symposium on Theory of Computing and Systems},
  pages={174--183},
  year={1997},
  organization={IEEE}
}

@inproceedings{ball2020complexity,
  title={On the complexity of decomposable randomized encodings, or: How friendly can a garbling-friendly PRF be?},
  author={Ball, Marshall and Holmgren, Justin and Ishai, Yuval and Liu, Tianren and Malkin, Tal},
  booktitle={11th Innovations in Theoretical Computer Science Conference (ITCS 2020)},
  pages={86--1},
  year={2020},
  organization={Schloss Dagstuhl--Leibniz-Zentrum f{\"u}r Informatik}
}

@article{applebaum2020communication,
  title={The communication complexity of private simultaneous messages, revisited},
  author={Applebaum, Benny and Holenstein, Thomas and Mishra, Manoj and Shayevitz, Ofer},
  journal={Journal of Cryptology},
  volume={33},
  number={3},
  pages={917--953},
  year={2020},
  publisher={Springer}
}

@article{girish2025comparing,
  title={Comparing classical and quantum conditional disclosure of secrets},
  author={Girish, Uma and May, Alex and Orshansky, Leo and Waddell, Chris},
  journal={arXiv preprint arXiv:2505.02939},
  year={2025}
}

@article{kawachi2021communication,
  title={Communication complexity of private simultaneous quantum messages protocols},
  author={Kawachi, Akinori and Nishimura, Harumichi},
  journal={arXiv preprint arXiv:2105.07120},
  year={2021}
}

@inproceedings{ball2022note,
  title={A note on the complexity of private simultaneous messages with many parties},
  author={Ball, Marshall and Randolph, Tim},
  booktitle={3rd Conference on Information-Theoretic Cryptography (ITC 2022)},
  pages={7--1},
  year={2022},
  organization={Schloss Dagstuhl--Leibniz-Zentrum f{\"u}r Informatik}
}

@book{KN, place={Cambridge}, title={Communication Complexity}, publisher={Cambridge University Press}, author={Kushilevitz, Eyal and Nisan, Noam}, year={1996}}

@article{girish2025magic,
  title={Magic and communication complexity},
  author={Girish, Uma and May, Alex and Parham, Natalie and Yuen, Henry},
  journal={arXiv preprint arXiv:2510.07246},
  year={2025}
}

@article{asadi2024rank,
  title={Rank lower bounds on non-local quantum computation},
  author={Asadi, Vahid R and Culf, Eric and May, Alex},
  journal={arXiv preprint arXiv:2402.18647},
  year={2024}
}

@article{speelman2015instantaneous,
  title={Instantaneous non-local computation of low {T}-depth quantum circuits},
  author={Speelman, Florian},
  journal={arXiv preprint arXiv:1511.02839},
  year={2015}
}

@inproceedings{buhrman2013garden,
  title={The garden-hose model},
  author={Buhrman, Harry and Fehr, Serge and Schaffner, Christian and Speelman, Florian},
  booktitle={Proceedings of the 4th conference on Innovations in Theoretical Computer Science},
  pages={145--158},
  year={2013}
}

@article{klauck2014new,
  title={New bounds for the garden-hose model},
  author={Klauck, Hartmut and Podder, Supartha},
  journal={arXiv preprint arXiv:1412.4904},
  year={2014}
}

@article{allerstorfer2024relating,
  title={Relating non-local quantum computation to information theoretic cryptography},
  author={Allerstorfer, Rene and Buhrman, Harry and May, Alex and Speelman, Florian and Lunel, Philip Verduyn},
  journal={Quantum},
  volume={8},
  pages={1387},
  year={2024},
  publisher={Verein zur F{\"o}rderung des Open Access Publizierens in den Quantenwissenschaften}
}

@article{asadi2025conditional,
  title={Conditional disclosure of secrets with quantum resources},
  author={Asadi, Vahid R and Kuroiwa, Kohdai and Leung, Debbie and May, Alex and Pasterski, Sabrina and Waddell, Chris},
  journal={Quantum},
  volume={9},
  pages={1885},
  year={2025},
  publisher={Verein zur F{\"o}rderung des Open Access Publizierens in den Quantenwissenschaften}
}

@article{GERTNER2000592,
title = {Protecting Data Privacy in Private Information Retrieval Schemes},
journal = {Journal of Computer and System Sciences},
volume = {60},
number = {3},
pages = {592-629},
year = {2000},
issn = {0022-0000},
doi = {https://doi.org/10.1006/jcss.1999.1689},
url = {https://www.sciencedirect.com/science/article/pii/S0022000099916896},
author = {Yael Gertner and Yuval Ishai and Eyal Kushilevitz and Tal Malkin}
}

@inproceedings{buhrman2001communication,
  title={Communication complexity lower bounds by polynomials},
  author={Buhrman, Harry and de Wolf, Ronald},
  booktitle={Proceedings 16th Annual IEEE Conference on Computational Complexity},
  pages={120--130},
  year={2001},
  organization={IEEE}
}

@inproceedings{yao1993quantum,
  title={Quantum circuit complexity},
  author={Yao, A Chi-Chih},
  booktitle={Proceedings of 1993 IEEE 34th Annual Foundations of Computer Science},
  pages={352--361},
  year={1993},
  organization={IEEE}
}

@book{o2014analysis,
  title={Analysis of boolean functions},
  author={O'Donnell, Ryan},
  year={2014},
  publisher={Cambridge University Press}
}

@article{grolmusz1997power,
  title={On the power of circuits with gates of low L1 norms},
  author={Grolmusz, Vince},
  journal={Theoretical computer science},
  volume={188},
  number={1-2},
  pages={117--128},
  year={1997},
  publisher={Elsevier}
}

@article{newman1991private,
  title={Private vs. common random bits in communication complexity},
  author={Newman, Ilan},
  journal={Information processing letters},
  volume={39},
  number={2},
  pages={67--71},
  year={1991}
}

@article{alicki2004continuity,
  title={Continuity of quantum conditional information},
  author={Alicki, Robert and Fannes, Mark},
  journal={Journal of Physics A: Mathematical and General},
  volume={37},
  number={5},
  pages={L55--L57},
  year={2004}
}

@article{winter2016tight,
  title={Tight uniform continuity bounds for quantum entropies: conditional entropy, relative entropy distance and energy constraints},
  author={Winter, Andreas},
  journal={Communications in Mathematical Physics},
  volume={347},
  number={1},
  pages={291--313},
  year={2016},
  publisher={Springer}
}

@article{applebaum2021placing,
  title={Placing conditional disclosure of secrets in the communication complexity universe},
  author={Applebaum, Benny and Vasudevan, Prashant Nalini},
  journal={Journal of Cryptology},
  volume={34},
  number={2},
  pages={11},
  year={2021},
  publisher={Springer}
}

@article{mottonen2004quantum,
  title={Quantum circuits for general multiqubit gates},
  author={M{\"o}tt{\"o}nen, Mikko and Vartiainen, Juha J and Bergholm, Ville and Salomaa, Martti M},
  journal={Physical review letters},
  volume={93},
  number={13},
  pages={130502},
  year={2004},
  publisher={APS}
}

@article{kim2025catalytic,
  title={Catalytic $ z $-rotations in constant $ T $-depth},
  author={Kim, Isaac H},
  journal={arXiv preprint arXiv:2506.15147},
  year={2025}
}

@article{kim2025any,
  title={Any {C}lifford+{T} circuit can be controlled with constant {T}-depth overhead},
  author={Kim, Isaac H and Laakkonen, Tuomas},
  journal={arXiv preprint arXiv:2512.24982},
  year={2025}
}

@article{kitaev1997quantum,
  title={Quantum computations: algorithms and error correction},
  author={Kitaev, A Yu},
  journal={Russian Mathematical Surveys},
  volume={52},
  number={6},
  pages={1191--1249},
  year={1997}
}

@book{nielsen2010quantum,
  title={Quantum computation and quantum information},
  author={Nielsen, Michael A and Chuang, Isaac L},
  year={2010},
  publisher={Cambridge university press}
}

@misc{gidney_x_1936285631359197210,
  author       = {Gidney, Craig},
  title        = {Post on {X} ({T}witter)},
  howpublished = {\url{https://x.com/CraigGidney/status/1936285631359197210}},
  note         = {Accessed: 2026-02-12},
}

\end{document}